\numberwithin{equation}{section} 
\theoremstyle{plain}
\newtheorem{lem}{Lemma}[section]
\newtheorem{remark}[lem]{Remark}
\newcommand{\CC}{\mathbb{C}}
\newcommand{\ZZ}{\mathbb{Z}}
\newcommand{\QQ}{\mathbb{Q}}
\newcommand{\NN}{\mathbb{N}}
\newcommand{\OB}{\mathcal{O}_B}
\newcommand{\OO}{\mathcal{O}}
\newcommand{\sOB}{\widetilde{\mathcal{O}}_B}
\newcommand{\sOO}{\widetilde{\mathcal{O}}}
\newcommand{\mah}{\mathcal{M}}
\newcommand{\gdisc}{\mathrm{GDisc}}
\newcommand{\lgdisc}{\mathrm{lGDisc}}
\newcommand{\sepa}{\mathrm{sep}}
\newcommand{\lsepa}{\mathrm{lsep}}
\newcommand{\sres}{\mathrm{sres}}
\newcommand{\mult}[1]{\mu_{_{#1}}}
\newcommand{\I}{\mathcal{I}}
\newcommand{\J}{\mathcal{J}}
\newcommand{\X}{\bm{X}}
\newcommand{\x}{\bm{x}}
\newcommand{\Y}{\bm{Y}}
\newcommand{\y}{\bm{y}}
\renewcommand{\a}{\bm{a}}
\newcommand{\res}{\mathrm{res}}
\def\norm#1{\| #1 \|}
\newcommand{\lc}{\text{lc}}
\let\original@algocf@latexcaption\algocf@latexcaption
\long\def\algocf@latexcaption#1[#2]{%
  \@ifundefined{NR@gettitle}{%
    \def\@currentlabelname{#2}%
  }{%
    \NR@gettitle{#2}%
  }%
  \original@algocf@latexcaption{#1}[{#2}]%
}
\renewcommand\footnotetextcopyrightpermission[1]{} 
\begin{document}
\fancyhead{}

\pagestyle{plain}

\title[On Isolating Roots in a Multiple Field Extension]{On Isolating Roots in a Multiple  Field Extension}
\author{{Christina Katsamaki}}
\email{christina.katsamaki@inria.fr}
\affiliation{%
  \institution{INRIA Paris, \\ Sorbonne Universit\'e and Paris Universit\'e}
  \streetaddress{4 place Jussieu}
  \city{F-75005, Paris}
  \country{France}
}

\author{Fabrice Rouillier}
\email{Fabrice.Rouillier@inria.fr}
\affiliation{%
  \institution{INRIA Paris, \\ Sorbonne Universit\'e and Paris Universit\'e}
  \streetaddress{4 place Jussieu}
  \city{F-75005, Paris}
  \country{France}
}

\renewcommand{\authors}{Christina Katsamaki and Fabrice Rouillier}
\begin{abstract}
We address univariate root isolation when the polynomial's coefficients are in a multiple field extension. We consider a polynomial $F \in L[Y]$, where $L$ is a multiple algebraic extension of $\QQ$. 
We provide aggregate bounds for $F$ and algorithmic and bit-complexity results for the problem of isolating its roots. 

For the latter problem we follow a common approach based on univariate root isolation algorithms. For the particular case where $F$ does not have multiple roots, we achieve a bit-complexity in $\sOB(n d^{2n+2}(d+n\tau))$, where $d$ is the total degree and $\tau$ is the bitsize of the involved polynomials.
In the general case we need to enhance our algorithm with a preprocessing step that determines the number of distinct roots of $F$. We follow a numerical, yet certified, approach that has bit-complexity
$\sOB(n^2d^{3n+3}\tau + n^3 d^{2n+4}\tau)$.
\end{abstract}


\keywords{root isolation, field extension, bit-complexity
}

\maketitle

\section{Introduction}

We consider the problem of isolating the (complex) roots of a univariate polynomial over a multiple algebraic field extension --the coefficients of the polynomial are multivariate polynomial functions evaluated at algebraic numbers--. Solving in a field extension is a common problem in computational mathematics; for example it arises in the topology computation of plane curves  \cite{Diatta2022,ks-jsc-2012} or it can be seen as a sub-problem in the resolution of triangular systems \cite{CHEN2012610,Xia2002AnAF} and regular chains \cite{Boulier10}.

For $n\ge2$, we consider $F_1\in \ZZ[X_1], \dots, F_n \in \ZZ[X_n]$ univariate polynomials of degree at most $M$ and bitsize $\Lambda$ and $F \in \ZZ[X_1,\dots, X_n,$
$Y]$ of total degree at most $d$ and bitsize $\tau$.
We want to isolate the roots of the system
\begin{equation}
\begin{aligned}
\label{eq:introsystem}
 \begin{Bmatrix} 
 F_1(X_1)=0\, , \dots,  F_n(X_n)=0\,,     \\
F(X_1,\dots,X_n,Y)=0\, .
\end{Bmatrix} 
\end{aligned}
\end{equation}
In theory, we can solve the system as follows: first, we isolate the roots of all the univariate polynomials $F_1, \dots, F_n$. Then, for every root $\x = (x_1, \dots, x_n) \in \CC^n $ of $\{F_1 = \dots = F_n= 0\}$ we employ Pan's algorithm \cite{PAN2002701} for the approximate factorization of the univariate polynomial $F(\x,Y)$, with the worst case precision; the
approximate factorization algorithm returns as many root approximations as the degree of $F(\x,Y)$ in $Y$ is, and by utilizing the worst case precision, we can identify the root approximations that correspond to the same root. This method leads to good worst-case bit-complexity estimates (Rem.~\ref{rem:pan}). Nevertheless, this is at the price of always requiring  to perform computations using the maximum precision and it cannot lead to a practical algorithm. 
For example, if $n=2$, $d=10$, and $\tau\in \OO(1)$, then 
we \emph{} have to work with $> 10^4$ bits in all of our computations.
Our goal is to introduce an adaptive algorithm, depending on the multiplicities of the roots and on their pairwise distances, so we will follow a different approach.


Isolating the roots of the system in Eq.~(\ref{eq:introsystem}) has not been treated in the literature currently, but only in a simplified setting (e.g. \cite{Rump77,Johnson97}).  In \cite{StrETmult} they consider the same problem  when $F$ does not have multiple roots; it is a generalization  of a prior work for a simple algebraic extension \cite{STRZEBONSKI201931,Johnson97}.  They propose three methods. The first one computes the minimal polynomial of the system and uses multivariate resultants. The second one is based on Sturm's algorithm and the third one on solving directly the polynomial using univariate root isolation algorithms, similarly to ours. The last method is the most efficient with a bit-complexity of $\sOB(n^3N^{2n+3})$, where $N$ is a bound on the size of the input polynomials,  without any assumptions on the input.  
In \cite{Diatta2022}, it is the first time the general problem, in the simple extension setting, is addressed in literature \cite{ks-jsc-2012,STRZEBONSKI201931,MEHLHORN201534,KOBEL2015206}. The authors provide precise amortized separation bounds for the polynomial and complexity bounds for isolating the roots. 
We provide further details on their method in the sequel, since we share many ideas.
We could also solve the system in Eq.~(\ref{eq:introsystem}) by applying a general algorithm for zero-dimensional square systems of an expected complexity in $\sOB\left((n+1)^{n (\omega+1)+2} N^{(\omega +2) n +2 } \right)$, where $\omega$ denotes the exponent in the complexity of matrix multiplication \cite{brand_complexity_2016}. However, such a method would not exploit the special structure of the system.

\vspace{0.15cm}
\vspace{-0.45cm}
\subsection*{Our approach and contribution}
We generalize the results of \cite{Diatta2022} for any $n>1$.
By following closely their techniques, we are able to provide amortized bounds on the separation of $F$. 
On solving the system of Eq.~(\ref{eq:introsystem}), the idea is to approximate the coefficients of $F$ for every $\x=(x_1, \ldots, x_n) \in \CC^n$ that is root of $F_1 = \ldots F_n=0$, up to a certain precision, so that to isolate the roots of $F(\x,Y)$, it suffices to isolate the roots of its approximation. 
The amortized bounds that we prove in Cor.~\ref{cor1} and Cor.~\ref{cor2}, quantify the required precision.
To find the roots of the approximation of $F(\x,Y)$ we can now use algorithms for univariate root isolation. Particularly, we employ the algorithm of \cite{MEHLHORN201534} that builds upon  the algorithm of approximate factorization of a polynomial of Pan \cite{PAN2002701}; if a univariate polynomial is of degree $d$, the approximate factorization algorithm returns $d$ root approximations.
Then, the approximations must be clustered in a way so that each cluster corresponds to a root, and it contains as many root approximations as the multiplicity of the corresponding root.
In \cite{MEHLHORN201534} they run Pan's algorithm multiple times with increasing precision.
For a stopping criterion, they require as input the number of distinct roots of the polynomial. Therefore, in our case, we  should also compute  the number of distinct roots of $F(\x,Y)$ for every root $\x \in \CC^n$ of $F_1 = \ldots = F_n=0$. This dominates the total bit-complexity.


In Sec.~\ref{sec:solving}, we compute the number of distinct roots of $F(\x,Y)$ using a numerical approach (Lem.~\ref{lem:numberofroots}). We compute the principal subresultant coefficients of $F$ and $\frac{\partial F}{\partial Y}$ with respect to $Y$. Then, for every root $\x$ of $F_1 = \ldots = F_n=0$, we approximate the principal subresultant coefficients up to the necessary precision so that we can determine their sign correctly. The index of the first non-zero subresultant coefficient gives the degree of the $\gcd$ of $F(\x,Y)$ and $\frac{\partial F(\x,Y)}{\partial Y}$, and thus the number of distinct roots. 
The total bit-complexity of solving the system of Eq.~(\ref{eq:introsystem}) then is described in Thm.~\ref{thm:complexity}(ii). In Rem.~\ref{rem:simplified}, we give for simplicity the bound for the case when all the polynomials have degree at most $d$ and bitsize $\tau$, which is 
$\sOB\left(n^2d^{3n+3}\tau +n^3 d^{2n+4}\tau \right) $.
On the contrary, when the number of distinct roots of $F(\x, Y)$ is known for every $\x$, or when $F$ is does not have multiple roots, we can isolate the roots of the system in 
$\sOB( nd^{2n+2}(d+n\tau))$ 
bit operations. 
This is to be compared with the result of \cite{StrETmult}; it improves it by a factor of $n$.

In Sec.\ref{sec:tsp} we apply the aggregate bounds of $F$ on the `Sum of Square Roots of Integers' Problem. Comparing the length of two paths in the Euclidean Travel Salesman Problem (TSP), also relates  to this problem. It has been already studied through the separation bound computation of the associated polynomial system \citep{Burnikel2000ASA,mehlhorn2000generalized}. Our approach matches the latter results, and, even more, the proven bounds are aggregate. 

\raggedbottom
\section{Notation and Prerequisites}
\label{sec:notation}

Let $n\in \NN$. We use the abbreviation $[n]$ for the set $\{1, \dots, n\}$.
We denote vectors by boldface symbols.
 For a vector $\x = (x_1,\dots,x_n) \in \CC^n$,  we denote the vector $(x_1,\dots, x_{i-1},x_{i+1},\dots, x_n) \in \CC^{n-1}$, $i\in [n]$ by $\x_{-i}$.
We call absolute $L$-bit approximation of a real number $a$, a rational number $\tilde{a}$ such that $|a-\tilde{a}|<2^{-L}$. 
We denote the arithmetic, resp.  bit,
complexity by $\OO$, resp. $\OB$ and we use $\sOO$, resp. $\sOB$, to ignore
(poly-) logarithmic factors.

For a polynomial  $f = \sum_{i=1}^d a_iX^i \in \CC[X]$ we denote its $\ell_1$-norm by $\norm{f}_1$, i.e., $\norm{f}_1 = \sum_{i=1}^d |a_i|$, its $\ell_2$-norm by $\norm{f}_2$, i.e., $\norm{f}_2 = \sqrt{\sum_{i=1}^d |a_i|^2}$ and its $\ell_\infty$-norm by $\norm{f}_\infty$, i.e., $\norm{f}_\infty = \max_{i\in \{0,\dots, d\}} |a_i|$.
We denote the leading coefficient of $f$ by $\lc(f)$. 
The $k$-th derivative of $f$ is denoted by $f^{(k)}$ and $f^{[k]}:= \frac{f^{(k)}}{k!}.$
When $f$ has integer coefficients, the \textit{bitsize} of the polynomial is defined as the logarithm of its $\ell_\infty$-norm. 
All the logarithms in the present paper are of base 2. 
A univariate polynomial with integer coefficients is of \textit{size} $(d,\tau)$ when its degree is at
most $d$ and it has bitsize $\tau$.  
Similarly, a multivariate polynomial with integer coefficients is of \textit{size} $(d,\tau)$ when its total degree is at
most $d$ and it has bitsize $\tau$.

Let $I = \langle f_1,\dots, f_k \rangle$ be a polynomial ideal in $\CC[X_1,\dots,X_n]$, $k \in \NN$. We denote the complex variety defined by $I$ by $V_\CC(I)$ or $V_\CC(f_1,\dots, f_k)$.
For a $\x\in V_\CC(I)$, we denote its multiplicity as root of $I$ by $\mult{I}(\x)$.  When $I$ is generated by one polynomial $f \in \CC[X]$, we write for simplicity $\mult{f}(x)$ to denote the multiplicity of $x$ as root of $I = \langle f\rangle$. The ideal $ \langle f_1,\dots,f_{i-1},f_{i+1}, \dots, f_k \rangle$, for $i\in \{1,\dots, k\}$ is denoted by $I \setminus f_i$.

\subsection{Univariate polynomials: some bounds and root isolation}

Let a univariate polynomial $f \in \CC[X]$ and $x \in V_\CC(f)$.
 The \textit{local separation of $f$ at $x$} is 
\begin{align*}
\sepa(x,f): = \min_{y\in V_\CC(f), y\neq x} |y-x|.
\end{align*}
The \textit{separation} of $f$ is
\begin{align*}
\sepa(f) := \min_{x\in V_\CC(f)} \sepa(x,f).
\end{align*}
The \textit{Mahler measure} of $f$ is defined as 
\begin{align*}
\mah(f) := |\lc(f)| \prod_{x\in V_\CC(f)} \max(1,|x|)^{\mult{f}(x)}.
\end{align*}
The following inequality bounds the Mahler measure of $f$ by means of its $\ell_1$ and $\ell_2$ norms \cite[Prop.10.8 and Prop.10.9]{BPR03}:
\begin{align}
\label{eq:bound_mah1}
2^{-d} \norm{f}_1 \le \mah(f) \le \norm{f}_2.
\end{align}
In particular, if $f \in \ZZ[X]$ and it is of size $(d,\tau)$, the previous inequality becomes
\begin{align}
\label{eq:bound_mah}
2^{-d} \norm{f}_1 \le \mah(f) \le \norm{f}_2\le 2^{\tau+ \log(d+1)}.
\end{align}
Following \cite[Def.~2.3 and Prop.~2.4]{Diatta2022}, we introduce the definition of the \textit{generalized discriminant} of $f\in \CC[X]$, which is 
\begin{align*}
\gdisc(f):= \lc(f)^{d-2} \prod_{x \in V_\CC(f)} f^{[\mult{f}(x)]}(x)^{\mult{f}(x)}\, .
\end{align*}
It plays an important role in the expression of  several bounds in the sequel.
We also define
\begin{align*}
	\lgdisc(f):= \sum_{x \in V_\CC(f)} \mult{f}(x)| \log (|f^{[\mult{f}(x)]}(x)|)|  \, \text{ and } \\
	\lsepa(f) := \sum_{x \in V_\CC(f)} \mult{f}(x)| \log (\sepa (x,f))|\, .  
\end{align*}
The next proposition, provides a bound for $\lsepa(f)$ by means of $\log \mah(f)$ and $\lgdisc(f)$.

\begin{proposition}[{\cite[Prop.~2.7]{Diatta2022}}] 
\label{prop:sep}
For a polynomial $f\in \CC[X]$ of degree $d$ with $| \lc(f)| \ge 1$, it holds that
$$\lsepa(f)\in \OO(d^2 +d \log \mah(f) +  \lgdisc(f)) .$$
\end{proposition}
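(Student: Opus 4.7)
The plan is to expand $\log|f^{[\mult{f}(x)]}(x)|$ using the factorization $f(X)=\lc(f)\prod_{y\in V_\CC(f)}(X-y)^{\mult{f}(y)}$, which gives the identity
\begin{equation*}
\log|f^{[\mult{f}(x)]}(x)|\;=\;\log|\lc(f)|+\sum_{y\neq x}\mult{f}(y)\log|x-y|,
\end{equation*}
and then isolate $\log\sepa(x,f)$ from the right-hand side. Write $m_x:=\mult{f}(x)$ for brevity and, for each root $x$, pick a neighbour $y^\star(x)$ achieving $|x-y^\star(x)|=\sepa(x,f)$. Split the sum $\sum_{y\neq x}m_y\log|x-y|$ into a ``far part'' $\Sigma_+(x)$ over indices with $|x-y|\geq 1$ and a ``near part'' $\Sigma_-(x)$ over indices with $|x-y|<1$; note $\Sigma_-(x)\leq 0$, and $\sepa(x,f)$ only interacts with $\Sigma_-$ (when $\sepa(x,f)<1$) or with the far side (otherwise).

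Next I would split $\lsepa(f)$ according to whether $\sepa(x,f)\geq 1$ or $<1$. For roots with $\sepa(x,f)\geq 1$, use $|x-y^\star(x)|\leq |x|+|y^\star(x)|\leq 2\max(1,|x|)\max(1,|y^\star(x)|)$ to obtain $|\log\sepa(x,f)|\leq 1+\log\max(1,|x|)+\log\max(1,|y^\star(x)|)$. Summing $m_x$ times this bound over $x$ and using $\sum_x m_x\log\max(1,|x|)\leq \log\mah(f)$ (definition of $\mah(f)$ together with $|\lc(f)|\geq 1$), together with $\log\max(1,|y^\star(x)|)\leq\log\mah(f)$, yields $O(d+d\log\mah(f))$.

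For roots with $\sepa(x,f)<1$, the dominant step is to observe that $|\log\sepa(x,f)|=-\log|x-y^\star(x)|\leq -\Sigma_-(x)$, since $m_{y^\star(x)}\geq 1$ and $-\log|x-y^\star(x)|$ is one of the non-negative summands in $-\Sigma_-(x)$. Rearranging the factorization identity gives
\begin{equation*}
-\Sigma_-(x)\;=\;\Sigma_+(x)+\log|\lc(f)|-\log|f^{[m_x]}(x)|,
\end{equation*}
and each term on the right is easy to control: $\Sigma_+(x)\leq\sum_{y}m_y(1+\log\max(1,|x|)+\log\max(1,|y|))\leq d+d\log\max(1,|x|)+\log\mah(f)$, while $|\log|\lc(f)||$ and $|\log|f^{[m_x]}(x)||$ sum (with weights $m_x$) to $d\log\mah(f)$ and exactly $\lgdisc(f)$ respectively. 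Multiplying by $m_x$ and summing over $x$, the Mahler-type aggregations $\sum_x m_x\log\max(1,|x|)\leq\log\mah(f)$ and $\sum_x m_x=d$ deliver $O(d^2+d\log\mah(f)+\lgdisc(f))$, matching the claim.

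The main obstacle is not a single deep step but the careful bookkeeping of signs: the identity for $\log|f^{[m_x]}(x)|$ mixes positive contributions (from distant roots, harmless via $\mah(f)$) with potentially very negative ones (from clustered roots), and one must make sure that when the nearest neighbour dominates, the residual near part $\Sigma_-$ minus its dominant term cannot cancel the contribution to $\lgdisc(f)$. Using the triangle inequality at the level of $-\Sigma_-(x)\geq |\log\sepa(x,f)|$ bypasses this cancellation issue cleanly and keeps the accounting of factors of $d$ as tight as the statement requires.
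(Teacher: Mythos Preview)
Your argument is correct. The paper does not actually prove this proposition; it is quoted from \cite[Prop.~2.7]{Diatta2022} without proof, so there is no in-paper argument to compare against. Your approach---using the factorization identity $f^{[m_x]}(x)=\lc(f)\prod_{y\neq x}(x-y)^{m_y}$, splitting $\sum_{y\neq x}m_y\log|x-y|$ into near/far parts, and observing that the near part alone already dominates $|\log\sepa(x,f)|$ when $\sepa(x,f)<1$---is exactly the standard route (and is essentially the proof in \cite{Diatta2022}). The bookkeeping is right: the far part contributes $O(d+d\log\max(1,|x|)+\log\mah(f))$ per root, the leading-coefficient term contributes $O(\log\mah(f))$, and the $-\log|f^{[m_x]}(x)|$ terms aggregate to $\lgdisc(f)$; summing with weights $m_x$ and using $\sum_x m_x\log\max(1,|x|)\le\log\mah(f)$ (which uses $|\lc(f)|\ge1$) gives the stated bound.
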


To isolate the roots of a univariate polynomial with coefficients in $\CC$ we will use the algorithm of Mehlhorn et al. \cite{MEHLHORN201534}.  The algorithm requires that the number $k$ of distinct roots is known. It first computes an approximate factorization of the polynomial using Pan's algorithm \cite{PAN2002701} with an initial precision. Assuming that the polynomial is of degree $d$, from the approximate factorization one obtains approximations $\tilde{z}_1,\dots, \tilde{z}_d$ of the roots. Then,
the roots are grouped in $k$ clusters based on geometric vicinity. Every cluster is enclosed by a disc, each one corresponding to a root. If the discs are disjoint and each one contains the same number of root approximations as the multiplicity of the corresponding root, then the algorithm terminates. Otherwise, the factorization is repeated with increased precision. 
The next proposition summarizes their result.

\begin{proposition}[{\cite[Thm.~3]{MEHLHORN201534}, \cite[Prop.~2.22]{Diatta2022}}]
\label{prop:solv}
Let $f(x) \in \CC[x]$ of degree $d \ge 2$, for whom it holds that  $1/4 \le |\lc(f)| \le 1$. We assume that the number of distinct roots of $f$ is known. We can compute isolating discs for all $x\in V_\CC(f)$, as well as their multiplicities, in 
\begin{align*}
\sOB\big(d^3 + d^2 \log \mah(f) + d \lgdisc(f) \big)\, . 
\end{align*}
As input, we need an oracle giving an absolute $L$-bit approximation of the coefficients of $f$  with $L$ bounded by 
\begin{align*}
\sOO\big( d\log \mah(f) +\lsepa(f) +\lgdisc(f) \big) \, . 
\end{align*}
\end{proposition}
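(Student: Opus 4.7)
The plan is to iterate Pan's approximate factorization algorithm \cite{PAN2002701} with precision that doubles at each attempt, and stop as soon as a verifiable clustering of the returned pseudo-roots emerges. At working precision $L$, one Pan call produces pseudo-roots $\tilde z_1,\dots,\tilde z_d$ and an approximation of $\lc(f)$ satisfying $\sum_i |z_{\sigma(i)}-\tilde z_i| \le 2^{-L}$ for some permutation $\sigma$, in $\sOB(d^2 L + d^3)$ bit operations. I would then group the $\tilde z_i$ by single linkage at a radius $r(L)$, declare success when the number of clusters equals the prescribed $k$, the enclosing discs are pairwise disjoint, and a soft Pellet-type predicate certifies on each disc that it contains a unique root whose multiplicity equals the cluster's cardinality; otherwise, double $L$ and retry.

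\textbf{Final precision.} The heart of the argument is bounding the smallest $L^{\star}$ for which every termination check succeeds. Once the approximation error falls below the local separation at a root $z$, its associated cluster detaches from its neighbours, yielding an aggregate contribution of $\sum_{z} \mult{f}(z)\,|\log \sepa(z,f)| = \lsepa(f)$; certifying the multiplicity via a Taylor truncation of $f$ around $\tilde z$ requires $|f^{[\mult{f}(z)]}(z)|$ to be safely bounded away from $0$, summing to $\lgdisc(f)$; and the global scale of the roots contributes $d\log \mah(f)$. Combining these gives $L^{\star} \in \sOO(d\log \mah(f) + \lsepa(f) + \lgdisc(f))$, which is precisely the oracle precision announced in the statement.

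\textbf{Complexity and main obstacle.} Geometric doubling ensures that the total bit-cost is dominated by the terminal Pan invocation, namely $\sOB(d^2 L^{\star} + d^3)$. Invoking Prop.~\ref{prop:sep} to eliminate $\lsepa(f)$ through $d^2 + d\log \mah(f) + \lgdisc(f)$, and absorbing the resulting $d^2$ contribution into the standalone $d^3$, yields the claimed $\sOB(d^3 + d^2 \log \mah(f) + d\,\lgdisc(f))$; the clustering and Pellet certification add only polylog overhead per iteration. The principal difficulty will be the multiplicity certificate: one must show that the Pellet-type predicate returns the exact integer multiplicity on each correct disc and never spuriously succeeds on a pair of clusters still merged at the current precision. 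This is what forces the $\lgdisc$ dependence and explains why the input number $k$ of distinct roots is indispensable, since without it a partially converged cluster cannot be distinguished from a truly isolated one.
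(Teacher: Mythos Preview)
The paper does not prove this proposition; it is quoted verbatim from \cite[Thm.~3]{MEHLHORN201534} and \cite[Prop.~2.22]{Diatta2022}, and the only argument the paper supplies is the informal algorithm description in the paragraph preceding the statement. Your sketch is faithful to that description and to the cited sources: Pan's factorization with geometrically increasing precision, clustering of the pseudo-roots, termination once $k$ disjoint discs emerge, and the use of Prop.~\ref{prop:sep} to absorb $\lsepa(f)$ into $d^2 + d\log\mah(f) + \lgdisc(f)$ for the final complexity bound.

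One small divergence worth flagging: you introduce a soft Pellet-type certificate on each disc to confirm the multiplicity, and you identify this as ``the principal difficulty''. In the Mehlhorn--Sagraloff--Wang algorithm as summarized here, no such certificate is needed precisely \emph{because} $k$ is given: the stopping test is purely combinatorial (exactly $k$ clusters, discs pairwise disjoint), and the multiplicity of each root is then read off as the cardinality of its cluster. The $\lgdisc(f)$ term enters not through a Pellet test but through the precision at which the $\mu$ pseudo-roots belonging to a root of multiplicity $\mu$ collapse to within the local separation---this is governed by $|f^{[\mu]}(z)|$ via the error analysis of Pan's factorization. Your version with an explicit Pellet check would also work and gives the same bounds, but it is slightly more machinery than the cited algorithm actually uses, and it obscures the reason the hypothesis ``$k$ is known'' is essential.
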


\subsection{Evaluation of polynomials}

If we want to evaluate a univariate polynomial $f \in \CC[X]$ of degree $d$ at some numbers $a_1,\dots,a_D\in \CC$, $D\in \NN$, we can use multipoint evaluation \cite{DBLP:journals/corr/abs-1304-8069}. When $D> d$, we have to repeat multipoint-evaluation $\left\lceil \frac{D}{d} \right \rceil$ times. When $D\le  d$ we have the following theorem:

\begin{theorem}[{\cite[Thm.9]{DBLP:journals/corr/abs-1304-8069}}]
\label{lem:multi1}
Let $f\in \CC[X]$ be a polynomial of degree $d$, with absolute value of coefficients at most $2^\tau$, and let $a_1, \dots, a_d\in \CC$ be complex points with absolute values bounded by $2^\Gamma$, where $\Gamma \ge 1$. Then, approximate multipoint evaluation up to a precision of $2^{-L}$ for some integer $L \ge 0$, that is, computing $\tilde{f}_i$ such that  $|\tilde{f}_i - f(a_i)| \le 2^{-L}$ for all $i$, can be done in $$\sOB\left(d(L+\tau+d\Gamma)\right)$$ bit-operations. The precision demand on $f$ and the points $a_i$ is bounded by $L+\sOO(\tau+d\Gamma)$ bits.
\end{theorem}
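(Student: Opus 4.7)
The plan is to carry out the classical subproduct-tree multipoint evaluation scheme in finite precision and to track how the precision and costs propagate through the tree. First, I would build a balanced binary tree $T$ of depth $\lceil \log d \rceil$ whose $d$ leaves are the linear factors $X - a_i$ and whose internal nodes store the product of all leaves in their subtree. Because $|a_i|\le 2^\Gamma$, the polynomial stored at a node at level $k$ (leaves at level $0$) has degree $2^k$ and its $j$-th coefficient is bounded in absolute value by $\binom{2^k}{j}2^{2^k\Gamma}$, so its bitsize is $O(2^k(1+\Gamma))$ and, in particular, the root polynomial has bitsize $O(d(1+\Gamma))$. The tree is built bottom-up using fast (FFT-based) polynomial multiplication of numerical approximations.

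Next, I would do the top-down evaluation pass: starting from $f$ at the root, at every internal node with left/right child polynomials $M_\ell$ and $M_r$ I compute the remainders of the incoming polynomial modulo $M_\ell$ and modulo $M_r$ using fast (Newton-iteration based) polynomial division, and recurse into the corresponding children. At each leaf, the remainder modulo $X - a_i$ is precisely the value $f(a_i)$, which is the desired output. Along any root-to-leaf path there are $O(\log d)$ multiplications/divisions, and at level $k$ the operation acts on polynomials of degree $\Theta(2^k)$.

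For the precision analysis, I would work throughout with a uniform working precision $L' := L + C(\tau + d\Gamma + \log d)$ for a suitable constant $C$. Each FFT-based multiplication/division on polynomials of this size introduces only $\sOO(1)$ bits of relative error, so the errors accumulate only logarithmically across the $O(\log d)$ levels; combined with the bitsize bound $O(d\Gamma)$ on the node polynomials and the bitsize $\tau$ on $f$, this shows that the absolute output error is at most $2^{-L}$, giving the claimed precision demand $L + \sOO(\tau + d\Gamma)$ on $f$ and the $a_i$'s. For the cost, at level $k$ there are $\Theta(d/2^k)$ operations, each on polynomials of degree $\Theta(2^k)$ with $L'$-bit coefficients, at a bit cost of $\sOB(2^k L')$ via fast arithmetic; summing across the $O(\log d)$ levels yields $\sOB(d L') = \sOB\!\left(d(L + \tau + d\Gamma)\right)$.

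The main obstacle is the precision analysis of the polynomial divisions in the downward sweep: a black-box error bound for polynomial division scales with the norm of the inverse of the divisor, which in principle can be as large as $2^{\Omega(d\Gamma)}$ and would spoil the bound. The point is that in this scheme one only needs the first $\deg f - \deg M + 1$ coefficients of the reciprocal of $M$, and these can be controlled uniformly in terms of $\Gamma$ using the explicit coefficient bounds of the subproduct polynomials above; this is the delicate step on which the theorem rests, but it is exactly what is worked out in the referenced result, and once it is in place the remaining bookkeeping for the working precision $L'$ and the complexity sum is routine.
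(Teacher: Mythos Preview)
The paper does not give its own proof of this statement: it is quoted verbatim as \cite[Thm.~9]{DBLP:journals/corr/abs-1304-8069} and used as a black box, so there is nothing to compare your argument against here.

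That said, your outline is exactly the standard subproduct-tree scheme underlying the cited result, and the cost bookkeeping you give is correct. The one place where your sketch is too optimistic is the sentence ``Each FFT-based multiplication/division on polynomials of this size introduces only $\sOO(1)$ bits of relative error.'' For the multiplications in the upward sweep this is fine, but for the divisions in the downward sweep it is not: the remainder $f \bmod M$ at a node can have coefficients of magnitude up to $2^{\tau + O(d\Gamma)}$ even when the incoming $f$ has small coefficients, because the quotient involves the inverse power series of $M$. You acknowledge this in your last paragraph, but the resolution you hint at (``the first few coefficients of the reciprocal can be controlled uniformly in terms of $\Gamma$'') is not quite the point; rather, one shows that the \emph{remainders themselves} stay of bitsize $O(\tau + d\Gamma)$ at every level, essentially because $f(a_i)$ is bounded by $2^{\tau + O(d\Gamma)}$ and the remainder at a node is the interpolating polynomial of these values at the corresponding subset of the $a_i$'s. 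With that bound in hand, a uniform working precision $L' = L + \sOO(\tau + d\Gamma)$ suffices, and the rest of your analysis goes through.
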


Now, we want to evaluate a multivariate polynomial $f\in \CC[\X]$ 
 at $\a_1,\dots, \a_D \in \CC^n$. 
As discussed in \cite{VANDERHOEVEN2020101405}, multipoint evaluation in the multivariate case is not an elementary extension of the univariate case, unless the evaluation points have good properties. In particular, when the evaluation points belong in a set of the form $S_1 \times \dots \times S_n$, it is advantageous to perform multipoint evaluation at each coordinate one by one. The advantage comes from the fact that the number of different values in each coordinate is $|S_i|$, whereas the evaluation points are in total $\prod_{i=1}^n |S_i|$.

\begin{proposition}
\label{prop:meval}
Let $f \in \CC[\X]$ be a polynomial of degree $d$ in each variable with absolute value of coefficients at most $2^\tau$ and $S = S_1 \times \dots \times S_n \subset \CC^n$ be  a set of complex points with absolute values bounded by $2^\Gamma$, where $\Gamma \ge 1$, and $|S_i| \le M$. Then, approximate multipoint evaluation up to a precision of $2^{-L}$ for some integer $L \ge 0$, that is, computing $\tilde{f}_i$ such that  $|\tilde{f}_i - f(\a_i)| \le 2^{-L}$ for all $i$, can be done in $$\sOB\left( \max(M, d)^{n-1}\left \lceil{\frac{M}{d}}\right \rceil(ndL +n^2 d \tau +n^3 \, d^2\, \Gamma  )\right)$$ bit-operations. The precision demand on $f$ and the points $a_j$ is bounded by $L + \sOO \left( n\tau + n^2d \Gamma\right)$ bits.
\end{proposition}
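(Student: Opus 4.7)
The plan is to reduce the multivariate evaluation to $n$ successive rounds of univariate approximate multipoint evaluation, one per variable, and apply Theorem~\ref{lem:multi1} at each round. Writing $f=\sum_{k_1,\dots,k_n} c_{k_1,\dots,k_n} X_1^{k_1}\cdots X_n^{k_n}$, at stage $i=1,\dots,n$ we contract the $X_i$-axis of the current table by replacing the $d+1$ entries along that axis with the evaluations of the corresponding univariate polynomial in $X_i$ at the points of $S_i$. Thus at the start of stage $i$ the table has shape $|S_1|\times\cdots\times|S_{i-1}|\times(d+1)^{n-i+1}$, and we perform $\prod_{j<i}|S_j|\cdot(d+1)^{n-i}$ univariate multipoint evaluations of degree-$d$ polynomials at $|S_i|\le M$ points. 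Since a multipoint evaluation at $D$ points costs $\lceil D/d\rceil$ invocations of Theorem~\ref{lem:multi1}, and $|S_j|\le M$, the number of invocations at stage $i$ is bounded by $M^{i-1}(d+1)^{n-i}\lceil M/d\rceil\le \max(M,d)^{n-1}\lceil M/d\rceil$; summing over the $n$ stages loses only a factor absorbed by $\sOO$.

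The second and more delicate task is to track bitsize and precision through the stages. The magnitude of table entries grows per contraction by at most $(d+1)\cdot 2^{d\Gamma}$, since each new entry is a sum of $d+1$ terms carrying a factor $a^k$ with $|a|\le 2^\Gamma$ and $k\le d$. Hence the effective bitsize at stage $i$ is $\tau_i = \tau + \sOO(id\Gamma)$. Conversely, by the precision clause of Theorem~\ref{lem:multi1}, to guarantee output precision $2^{-L_i}$ at stage $i$ we must feed inputs with precision $L_i+\sOO(\tau_i+d\Gamma)$. Propagating backwards from $L_n=L$ we obtain $L_i = L + \sOO((n-i)\tau+(n-i)\cdot nd\Gamma)$; in particular the precision demand on the coefficients of $f$ and on the evaluation points is $L+\sOO(n\tau+n^2d\Gamma)$, matching the stated bound.

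Putting the two analyses together, a single invocation at stage $i$ costs $\sOB(d(L_i+\tau_i+d\Gamma))$, which we uniformly bound by $\sOB(d(L+n\tau+n^2d\Gamma))$. Multiplying by the total number $n\cdot\max(M,d)^{n-1}\lceil M/d\rceil$ of invocations and distributing the leading $d$ yields the claimed $\sOB\!\bigl(\max(M,d)^{n-1}\lceil M/d\rceil(ndL+n^2d\tau+n^3d^2\Gamma)\bigr)$. The main obstacle I foresee is precisely the precision bookkeeping in the backward induction: a careless per-stage analysis only gives $nd\tau$ and $n^2d^2\Gamma$ terms, and producing the correct $n^2d\tau$ and $n^3d^2\Gamma$ requires recognizing that the Theorem~\ref{lem:multi1} precision demand inflates \emph{additively} by $\sOO(\tau_i+d\Gamma)$ at each of the $n$ stages, with $\tau_i$ itself already of order $\tau+nd\Gamma$ by the last stage.
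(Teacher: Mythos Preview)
Your proposal is correct and follows essentially the same strategy as the paper: both reduce the $n$-variate evaluation over the grid $S_1\times\cdots\times S_n$ to $n$ successive rounds of univariate approximate multipoint evaluation (one variable at a time), count $M^{k-1}(d+1)^{n-k}\lceil M/d\rceil$ invocations of Theorem~\ref{lem:multi1} at stage $k$, bound this by $\max(M,d)^{n-1}\lceil M/d\rceil$, and track the bitsize growth $\tau+\sOO(kd\Gamma)$ together with the backward precision propagation to arrive at the same cost and precision demand. The only cosmetic difference is that the paper phrases the stages via the coefficient polynomials $f_{e_{k+1},\dots,e_n}(X_1,\dots,X_k)$ whereas you phrase them as axis contractions of a table; and one minor slip in your write-up is the remark that summing over the $n$ stages ``loses only a factor absorbed by $\sOO$'' (a factor $n$ is not polylogarithmic), though you correctly retain this $n$ in the final tally.
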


\begin{proof}
For any $k=1,\dots, n-1$, we can write $f$ as a polynomial in the variables $X_{k+1},\dots,X_n$ as $$f(\X) = \sum_{\substack{e_i \in \{0,\dots, d\}, \\i=k+1, \dots, n}} f_{e_{k+1}, \dots, e_n}\left(X_1,\dots,X_k\right) \, \,X_{k+1}^{e_{k+1}}\cdots X_n^{e_n}.$$
Then, $|f_{e_{k+1}, \dots, e_n}(X_1,\dots,X_k)|\le d^k \cdot 2^\tau \cdot 2^{kd\Gamma}$.
In particular, for $k=n-1$ and $\mathbf{a}_{-n} \in S_1\times \dots S_{n-1}$,  $f(\mathbf{a}_{-n},X_n) = \sum_{i=0}^d f_i(\mathbf{a}_{-n}) \, X_n^i$, with $|f_i(\mathbf{a}_{-n})| \le 2^{\sOO(\tau +(n-1)d\Gamma)}$. Evaluating $f(\mathbf{a}_{-n},X_n)$ at $S_n$ with precision $L$ can be done in $\sOB(d(L+\tau +(n-1)d\Gamma))$ with a precision demand on $f_i(\mathbf{a}_{-n})$ and on the points $a_n \in S_n$ in $L+\sOO(\tau+(n-1)\, d\, \Gamma)$ bits.

Recursively, for any $k\in \{1, \dots, n-1\}$ and $\mathbf{a}^{k-1}:= (a_1,\dots, a_{k-1}) \in S_1\times \dots \times S_{k-1}$, we need to evaluate the polynomials $f_{e_{k+1},\dots, e_n}(\mathbf{a}^{k-1},X_k)$ at $S_k$ with precision $L+ (n-k) \tau + d \Gamma \sum_{i=k}^{n-1} i $. 
Since the polynomial has coefficients with absolute value bounded by $\tau + (k-1)d\Gamma$, the required precision on the coefficients and on the points in $S_k$ is in $L+ \sOO((n-k+1) \tau +\sum_{i=k-1}^{n-1} i d \Gamma)$.
For a polynomial $f_{e_{k+1},\dots, e_n}$ this requires at most $M^{k-1} \cdot \left\lceil \frac{M}{d} \right\rceil$ multipoint evaluations of cost $\sOB(d(L +n\tau +n^2 d \Gamma))$ each one. For a fixed $k$ there are at most $(d+1)^{n-k}$ polynomials $f_{e_{k+1},\dots, e_n}$ to be evaluated, so this yields a complexity in $\sOB(d^{n-k} M^{k-1} \lceil \frac{M}{d} \rceil d(L +n\tau +n^2 d \Gamma))$ bit-operation.
By summing for all $k=1, \dots, n-1$ we obtain a total bit-complexity in 
\[
\sOB\left(\max(M,d)^{n-1}\left\lceil \frac{M}{d} \right\rceil nd(L +n\tau +n^2 d \Gamma)\right) \, ,
\]
to compute all the evaluations with an error bounded by $2^{-L}$ and a required precision of all the coordinates of the points in $S$ bounded by $L + \sOO(n\tau + n^2d \Gamma)$ bits.
\end{proof}

Notice that in both Thm.~\ref{lem:multi1} and Prop.~\ref{prop:meval}, the existence of an oracle providing the necessary approximations is assumed.

\section{Amortized Bounds for Polynomials in a Multiple Field Extension}
\label{sec:bounds}
Let $F_1\in \ZZ[X_1], \dots, F_n \in \ZZ[X_n]$ be univariate polynomials of size $(M,\Lambda)$ and $F \in \ZZ[\X,Y]$ of size $(d,\tau)$.
We consider the  ideals 
\begin{align}
	\label{eq:ideals-I-J}
	\mathcal{I} = \langle F_1, \dots, F_n \rangle \subset \ZZ[X_1,\dots, X_n]
	\quad \text{ and } \nonumber \\
	\mathcal{J} = \langle F_1, \dots,F_n ,F\rangle \subset \ZZ[X_1,\dots, X_n,Y].
\end{align}
For $\x\in V_\CC(\I)$, let $F_{\x}(Y) := F(\x,Y)$.
We prove aggregate separation bounds for the roots of $F$ in $(\ZZ[X_1,\dots,X_n])[Y]$.
We closely follow  {\cite{Diatta2022}}, where they treat the simple extension case, and we generalize their results to the $n$-variate field extension. 
We use Lem.~\ref{lem1mult} and Lem.~\ref{lem2mult}, which are generalizations of Prop.~2.10 and Prop.~3.3 of {\cite{Diatta2022}} respectively, as building blocks for our proofs.
Lem.~\ref{lem1mult} gives upper and lower bounds on the product of the evaluations of $n$-variate polynomials at all points in $V_\CC(\I)$ and in
Lem.~\ref{lem2mult}  the evaluation is of a set of $n+1$-variate polynomials at all points in $V_\CC(\J)$. 

First, due to the special structure of the ideals $\I$ and $\J$, we have the following result on the multiplicities of the roots of the corresponding varieties. It will be used in the proof of Lem.~\ref{lem1mult}.

\begin{lemma}[{\cite[Prop.3]{Imbach2021ClusteringCZ}, \cite{Zhang09}}]
\label{lem:mult} 
Let $\I$ and $\J$ be the ideals of  Eq.~\eqref{eq:ideals-I-J}.
For any $\x\in V_\CC(\I) $ and any $i\in [n]$ it holds that $\mult{\I}(\x) = \mult{F_i}(x_i) \cdot \mult{\I\setminus F_i}(\x_{-i})$. Moreover, for any $(\x,y) \in V_\CC(\J)$, it holds that $\mult{\J}(\x,y) = \mult{\I}(\x) \cdot \mult{F_{\x}}(y)$.
\end{lemma}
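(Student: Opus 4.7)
My plan is to treat the multiplicity of a point as a root of a zero-dimensional ideal as the $\CC$-dimension of the corresponding local component of the quotient ring, i.e.\ $\mult{I}(\x) = \dim_\CC \OO_{\x,I}$ where $\OO_{\x,I}$ is the localization of $\CC[\X]/I$ at the maximal ideal of $\x$. With this definition the two claims reduce to two algebraic decomposition statements that exploit the triangular structure of $\I$ and $\J$.

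For the first claim I would first prove the stronger factorization $\mult{\I}(\x)=\prod_{i=1}^n \mult{F_i}(x_i)$, from which the stated identity follows by separating factor $i$ from the rest. Because each $F_i$ involves a distinct variable, $\CC[\X]/\I \cong \bigotimes_{i=1}^n \CC[X_i]/(F_i)$ as a $\CC$-algebra. Each univariate factor splits by CRT as a product of local Artin rings indexed by the roots of $F_i$, so the tensor product splits as a product indexed by $V_\CC(\I)$ with local component at $\x$ equal to $\bigotimes_{i=1}^n \CC[X_i]_{(X_i-x_i)}/(F_i)$, of $\CC$-dimension $\prod_i \mult{F_i}(x_i)$. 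Matching with $\mult{\I}(\x)=\dim_\CC \OO_{\x,\I}$ gives the formula, and factoring out the index $i$ yields the form stated in the lemma.

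For the second claim I would write $\CC[\X,Y]/\J \cong (\CC[\X]/\I)[Y]/(F)$ and localize at $(\x,y)$. Localizing the $\X$-variables first, this becomes
\[
\OO_{(\x,y),\J} \cong \OO_{\x,\I}[Y]_{(Y-y)}/(F).
\]
Now I would invoke Hensel's lemma on the complete (in fact, already complete, since Artinian) local ring $\OO_{\x,\I}$: the reduction of $F$ modulo the maximal ideal $\mathfrak{m}_\x$ is $F_\x(Y) \in \CC[Y]$, whose coprime factorization $F_\x = (Y-y)^{\mult{F_\x}(y)} \cdot h$, with $h(y)\ne 0$, lifts to a factorization $F = f_y \cdot g_y$ in $\OO_{\x,\I}[Y]$ with $f_y$ monic of degree $\mult{F_\x}(y)$ and $g_y(y)$ a unit. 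Thus $g_y$ becomes invertible after localizing at $(Y-y)$ and $\OO_{(\x,y),\J} \cong \OO_{\x,\I}[Y]/(f_y)$, which is free of rank $\mult{F_\x}(y)$ over $\OO_{\x,\I}$. Taking $\CC$-dimensions gives $\mult{\J}(\x,y) = \mult{\I}(\x)\cdot \mult{F_\x}(y)$.

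The main obstacle is the justification that one may apply Hensel (equivalently, decompose the quotient along the roots of $F_\x$) over the non-reduced Artinian local ring $\OO_{\x,\I}$; once one accepts that Artinian local rings are henselian, the rest is bookkeeping of dimensions. An alternative, if one prefers to avoid invoking Hensel explicitly, is to argue directly that the idempotents of $\CC[Y]/(F_\x)$ corresponding to the distinct roots of $F_\x$ lift uniquely to idempotents of $\OO_{\x,\I}[Y]/(F)$ by successive approximation in the nilpotent ideal $\mathfrak{m}_\x$, giving the same decomposition. Either route avoids any calculation beyond dimension counting.
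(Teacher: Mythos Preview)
The paper does not supply a proof of this lemma; it is quoted from \cite[Prop.~3]{Imbach2021ClusteringCZ} and \cite{Zhang09}. So there is nothing in the paper to compare your argument against, and the relevant question is simply whether your proof stands on its own. It does.

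Your approach via local algebra is the standard one and is correct. For the first identity the tensor decomposition $\CC[\X]/\I\cong\bigotimes_i\CC[X_i]/(F_i)$ together with CRT gives the full factorization $\mult{\I}(\x)=\prod_i\mult{F_i}(x_i)$, from which the stated form follows. For the second identity, invoking Hensel over the Artinian (hence complete, hence henselian) local ring $\OO_{\x,\I}$ is exactly the right move. Two small remarks. First, the localization you write as ``at $(Y-y)$'' is really at the maximal ideal $\mathfrak m_\x+(Y-y)$ of $\OO_{\x,\I}[Y]$; this does not matter for the conclusion because $\OO_{\x,\I}[Y]/(f_y)$ is already local (the reduction $(Y-y)^{\mult{F_{\x}}(y)}$ has $y$ as its only root), but it is worth stating precisely. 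Second, note that $F$ need not be monic in $Y$ over $\OO_{\x,\I}$ (the leading coefficient $f_d$ may lie in $\mathfrak m_\x$). The version of Hensel you need is the one for henselian local rings that lifts a coprime factorization $\bar F=g_0h_0$ with $g_0$ \emph{monic}; since your monic factor is $(Y-y)^{\mult{F_{\x}}(y)}$, this hypothesis is satisfied regardless of whether $F$ itself is monic, and the lifted $f_y$ is monic of the correct degree, so $\OO_{\x,\I}[Y]/(f_y)$ is free of rank $\mult{F_{\x}}(y)$ as claimed. Your alternative via lifting idempotents through the nilpotent ideal $\mathfrak m_\x$ is equally valid and amounts to the same thing.
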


\begin{lemma}
\label{lem1mult}
Let $\I$ be the ideal of  Eq.~\eqref{eq:ideals-I-J}
and  $G_1,\dots , G_m \in \ZZ[X_1,\dots,X_n]$ of sizes $(\delta,\sigma)$. 

\noindent (i) Let $A\subseteq V_\CC(\mathcal{I})$ such that for every $\x \in A$, there exists an index $ i(\x)\in [m]$ such that $G_{i(\x)} (\x) \neq 0$. Then, 
\[
\sum_{\x \in A} \mult{\I}(\x) \log (|G_{i(\x)}(\x)| )\in \sOO\left(M^n(n+\sigma) +n\delta M^{n-1}\Lambda\right) \, . 
\]
(ii) If for every $\x  \in V_\CC(\mathcal{I})$ there exists an index $i\in [m]$ with $G_{i(\x)}(\x)\neq 0$, then we denote by $i(\x)$ the smallest such index. In this case,
\[
\sum_{\x \in V_\CC(\mathcal{I})} \mult{\I}(\x)\big| \log (|G_{i(\x)}(\x)|)\big| \in \sOO \left(M^n(n+\sigma) +n\delta M^{n-1}\Lambda \right) \, . 
\]

\end{lemma}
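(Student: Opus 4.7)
The plan is to follow the blueprint of \cite[Prop.~2.10]{Diatta2022} and lift it to the $n$-variate setting by means of the multiplicativity $\mult{\I}(\x)=\prod_j \mult{F_j}(x_j)$ furnished by Lem.~\ref{lem:mult}. The underlying arithmetic fact is that the Mahler measure of each univariate factor $F_j$ controls the size of the $j$-th coordinate across $V_\CC(\I)$, which is exactly what allows replacing a factor of $M$ by $\log\mah(F_j)\in\sOO(\Lambda)$ in the terms depending on $\Lambda$.

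For part~(i), I would use the pointwise bound $|G_i(\x)|\le \|G_i\|_1\prod_{j=1}^n\max(1,|x_j|)^{\delta}$, yielding
\[
\log|G_{i(\x)}(\x)|\ \le\ \log\|G_{i(\x)}\|_1+\delta\sum_{j=1}^n\log\max(1,|x_j|).
\]
Weighting by $\mult{\I}(\x)$ and summing over $A\subseteq V_\CC(\I)$, the first term contributes at most $M^n(\sigma+n\log(\delta+1))\in\sOO(M^n(n+\sigma))$ since $\sum_{\x\in V_\CC(\I)}\mult{\I}(\x)\le M^n$. For the second, Lem.~\ref{lem:mult} gives
\begin{align*}
\sum_{\x\in V_\CC(\I)}\mult{\I}(\x)\log\max(1,|x_j|)
&=\prod_{k\ne j}M_k\cdot \sum_{x_j\in V_\CC(F_j)}\mult{F_j}(x_j)\log\max(1,|x_j|)\\
&\le M^{n-1}\log\mah(F_j),
\end{align*}
which is $\sOO(M^{n-1}\Lambda)$ by Eq.~\eqref{eq:bound_mah}. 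Multiplying by $\delta$ and summing over $j$ produces the claimed $n\delta M^{n-1}\Lambda$ contribution.

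For part~(ii), I would split $|\log|G_{i(\x)}(\x)||=\log^+|G_{i(\x)}(\x)|+\log^-|G_{i(\x)}(\x)|$; the positive part is controlled by~(i), so it suffices to lower bound $|G_{i(\x)}(\x)|$ in aggregate. The decisive observation is that the map $\x\mapsto i(\x)$ is Galois invariant: for every $\sigma\in\mathrm{Gal}(\overline{\QQ}/\QQ)$, $G_j(\sigma\x)=\sigma(G_j(\x))$, hence $G_j(\sigma\x)=0\iff G_j(\x)=0$, so the fibers $A_i:=i^{-1}(i)$ are Galois-stable. Consequently $Q:=\prod_{\x\in V_\CC(\I)}G_{i(\x)}(\x)^{\mult{\I}(\x)}$ is a nonzero rational number, and it suffices to bound its denominator. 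An iterated-resultant/field-norm computation shows that multiplying $Q$ by $\prod_j\lc(F_j)^{e_j}$ with $e_j\in\sOO(\delta M^{n-1})$ produces a nonzero integer, so $|Q|\ge 2^{-\sOO(n\delta M^{n-1}\Lambda)}$. Letting $S_+$ (resp.\ $S_-$) denote the sum restricted to $\x$ with $|G_{i(\x)}(\x)|\ge 1$ (resp.\ $<1$), the identity $\sum_\x\mult{\I}(\x)|\log|G_{i(\x)}(\x)||=2S_+ -\log|Q|$, combined with the (i)-bound on $S_+$ and $-\log|Q|\le\sOO(n\delta M^{n-1}\Lambda)$, closes the proof.

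The hardest step will be obtaining the sharp exponent $M^{n-1}$ (rather than $M^n$) in the denominator estimate for $Q$. A straightforward orbit-by-orbit algebraic-integer argument only yields $Q\cdot\prod_j\lc(F_j)^{\sOO(\delta M^n)}\in\ZZ$, which costs a factor of $M$ too much. The sharp value is precisely the exponent appearing in the iterated-resultant identity
\[
\res_{X_1}\!\bigl(F_1,\res_{X_2}(F_2,\ldots,\res_{X_n}(F_n,G_i))\bigr)=\pm\prod_{j=1}^n\lc(F_j)^{d_j\prod_{k\ne j}M_k}\cdot\prod_{\x\in V_\CC(\I)}G_i(\x)^{\mult{\I}(\x)},
\]
which is an integer whenever $G_i$ is nonvanishing on $V_\CC(\I)$. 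To adapt this to the piecewise choice $i(\x)$, I would either factorize each $F_j$ over $\ZZ$ along the Galois-stable partition $\{A_i\}$ (using Gauss's lemma / Mignotte factor bounds) and apply the resultant identity stratum-by-stratum, or encode the partition via a single polynomial non-vanishing on $V_\CC(\I)$; either route requires careful tracking of the $\lc(F_j)$-exponents to keep them at $\sOO(\delta M^{n-1})$.
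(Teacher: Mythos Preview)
Your part~(i) is correct and essentially identical to the paper's argument.

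For part~(ii), your decomposition $\sum_{\x}\mult{\I}(\x)\,|\log|G_{i(\x)}(\x)||=2S_+-\log|Q|$ and your Galois-invariance observation that $Q\in\QQ^\times$ are both correct, and the overall structure matches the paper. However, you explicitly flag but do not close the decisive step: showing $|Q|\ge 2^{-\sOO(n\delta M^{n-1}\Lambda)}$ with the sharp exponent $M^{n-1}$. Your route~1 (factor each $F_j$ along the partition $\{A_i\}$) does not work as stated: the sets $A_i\subseteq V_\CC(\I)$ need not project to a \emph{partition} of $V_\CC(F_j)$, since distinct $A_i$'s can share $j$-th coordinates; even within a single box $\prod_j V_\CC(F_j^{(k_j)})$ of irreducible factors the Galois action on $V_\CC(\I)$ need not be transitive, so $i(\x)$ may vary there. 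Hence there is no corresponding $\ZZ$-factorization of the $F_j$'s to which one could apply the iterated-resultant identity stratum by stratum. Your route~2 is the right direction but is left unspecified.

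The paper realizes route~2 with a device that bypasses both Galois theory and the partition difficulty: introduce an auxiliary variable $U$ and set $G(\X,U):=\sum_{i=1}^m G_i(\X)\,U^{i-1}$. By hypothesis $G(\x,U)\not\equiv 0$ for every $\x\in V_\CC(\I)$, and its lowest-order coefficient in $U$ is precisely $G_{i(\x)}(\x)$, by the definition of $i(\x)$ as the smallest nonvanishing index. Poisson's formula gives
\[
\res_{\X}(F_1,\ldots,F_n,G)\;=\;\Bigl(\prod_{j=1}^n\lc(F_j)^{\prod_{k\ne j}\deg F_k}\Bigr)^{\OO(\delta)}\prod_{\x\in V_\CC(\I)}G(\x,U)^{\mult{\I}(\x)}\;\in\;\ZZ[U]\setminus\{0\},
\]
and the lowest nonzero coefficient of this integer polynomial in $U$ equals $\bigl(\prod_j\lc(F_j)^{\OO(\delta M^{n-1})}\bigr)\cdot Q$, hence has absolute value at least~$1$. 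This yields $|Q|\ge 2^{-\OO(n\delta M^{n-1}\Lambda)}$ directly, with the sharp $M^{n-1}$ exponent coming for free from the resultant's multidegree---exactly the quantity you were trying to recover by hand.
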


\begin{proof}
(i) For any $\x = (x_1,\dots,x_n)\in A$, 
\begin{align*}
	 |G_{i(\x)}(\x)| \le {{\delta+n}\choose{n}} 2^{\sigma} \prod_{j=1}^n \max\{1,|x_j|\}^{\delta},
\end{align*}
since the number of monomials in $\ZZ[X_1,\ldots, X_n]$ of degree less than or equal to $\delta$ is ${\delta+n}\choose{n}$, the absolute value of every coefficient is  $ \le 2^{\sigma}$ and every $x_j$ is of degree at most $\delta$.
Therefore, 
\begin{align}
\label{lemmult1:eq1}
	 \prod_{\x\in A}  |G_{i(\x)}(\x)|^{\mult{\I}(\x)} \le   \prod_{\x\in A}\Bigg( {{\delta+n}\choose{n}} 2^{\sigma} \prod_{j=1}^n\max\{1,|x_j|\}^{\delta}\Bigg)^{\mult{\I}(\x)}{\text{\hspace{-0.5cm}}}.
\end{align}
Since $\sum_{\x\in A} \mult{\I}(\x) \le M^n$ and ${{\delta+n}\choose{n}} \in \OO((\delta+n)^n)$, we have:
\begin{align}
\label{lemmult1:eq2}
\prod_{\x\in A} \Bigg({{\delta+n}\choose{n}} 2^{\sigma} \Bigg)^{\mult{\I}(\x)}   \in 2^{\sOO(M^n(n+\sigma))}.
\end{align}
For $j \in [n]$ it holds that
 \begin{align*}
	\prod_{\x\in A} \max\{1,|x_j|\}^{\delta \, \mult{\I}(\x)}   =  \prod_{\x\in A} \big(\max\{1,|x_j|\}^{\mult{F_j}(x_j)}\big)^{\delta \, \mult{\I \setminus F_j}(\x_{-j})}  \\ 
	 = \prod_{x_j\mid \x \in A} \big(\max\{1,|x_j|\}^{\mult{F_j}(x_j)}\big)^{\delta \, \sum_{\x_{-j} \mid \x \in A} \mult{\I \setminus F_j}(\x_{-j})} \\
	\le  \prod_{x_j\mid \x \in A} \big(\max\{1,|x_j|\}^{\mult{F_j}(x_j)}\big)^{\delta \, M^{n-1}} \le \mathcal{M}(F_j)^{\delta \,M^{n-1}},
	\end{align*}
where the first equality follows from Lem.~\ref{lem:mult} and the first inequality from the fact that $\sum_{\x_{-j} \mid \x \in A} \mult{\I \setminus F_j}(\x_{-j}) \le M^{n-1}$. Note that the last inequality is true since the coefficients of $F_j$ are in $\ZZ$ 
and so the absolute value of the leading coefficient of $F_j$ is greater or equal to 1.
We have that $\mathcal{M}(F_j) \in 2^{\OO(\Lambda +\log M)}$, following Eq.~(\ref{eq:bound_mah}). Therefore,
 \begin{align}
\label{lemmult1:eq3}
		 \prod_{\x\in A} \max\{1,|x_j|\}^{\delta \mult{\I}(\x)} \in 2^{\sOO(\delta M^{n-1}\Lambda)}.
	\end{align}
From the equations (\ref{lemmult1:eq1}), (\ref{lemmult1:eq2}) and (\ref{lemmult1:eq3}), we conclude.\\
(ii) Let $A = \{ \x \in V_\CC(\I) \mid |G_{i(\x)}(\x) | \ge 1\}$.
 Then, we can write:
\begin{align}
\label{lemmult2:eq1}
\sum_{\x \in V_\CC(\mathcal{I})} \mult{\I}(\x) \big| \log (|G_{i(\x)}(\x)| ) \big| = 2\sum_{\x \in A} \mult{\I}(\x) \log (|G_{i(\x)}(\x)| ) 
- \nonumber \\ - \sum_{\x \in V_\CC(\mathcal{I})} \mult{\I}(\x) \log (|G_{i(\x)}(\x)| ).
\end{align}
Using (i) of this lemma, we obtain an upper bound for the first term of the previous sum. Thus, we only need to compute a lower bound for $ \sum_{\x \in V_\CC(\mathcal{I})} \mult{\I}(\x) \log (|G_{i(\x)}(\x)| )$.
Let $G(X_1,\dots,X_n,U) = G_1(X_1,\dots, X_n) + G_2(X_1,\dots, X_n) U + \dots +G_m(X_1,\dots, X_n)U^{m-1}$. We consider $\res_{\X}(F_1,\dots,F_n,G)$, the multivariate resultant where we eliminate $\X$. Using the Poisson formula  \cite[Thm.~4.14]{BPR03} we can write
\begin{align*}
\res_{\X}(F_1,\dots,F_n,G)  =  \res_{\X}(\text{lc}(F_1)X_1^{\text{deg}(F_1)},\dots, \text{lc}(F_n)X_n^{\text{deg}(F_n)})^{\OO(\delta)} \cdot \nonumber \\ \cdot \prod_{\x \in V_\CC(\mathcal{I})}G(\x,U)^{\mult{\I}(\x)} = \\
= \Bigg(\res_{\X}(X_1^{\text{deg}(F_1)}, \dots,X_n^{\text{deg}(F_n)}) \prod_{j=1}^n \text{lc}(F_j)^{\prod_{k\in[n], k\neq j} \text{deg}(F_k)} \Bigg)^{\OO(\delta)}  \cdot  \\ \cdot \prod_{\x \in V_\CC(\mathcal{I})}G(\x,U)^{\mult{\I}(\x)} =  \\
= \Bigg(\prod_{j=1}^n \text{lc}(F_j)^{\prod_{k\in[n], k\neq j} \text{deg}(F_k)} \Bigg)^{\OO(\delta)}\prod_{\x\in V_\CC(\mathcal{I})}G(\x,U)^{\mult{\I}(\x)},
\end{align*}
which is a polynomial in $\ZZ[U]$; it is not identically zero, since by the hypothesis, for every $\x\in V_\CC(\I)$, $G(\x, U)$ is not identically zero. The absolute value of the constant term
 of $\res_{\X}(F_1,\dots,F_n,G)$ is
 \begin{align}
\label{lemmult2:eq2}
\Bigg|\prod_{j=1}^n \text{lc}(F_j)^{\prod_{k\in[n], k\neq j} \text{deg}(F_k)} \Bigg|^{\OO(\delta)}\prod_{\x \in V_\CC(\mathcal{I})}|G_{i(\x)}(\x)|^{\mult{\I}(\x)} \ge 1.
\end{align}
Since $\left|\prod_{j=1}^n \text{lc}(F_j)^{\prod_{k\in[n], k\neq j} \text{deg}(F_k)} \right| \in 2^{\OO(n \Lambda M^{n-1})}$, it follows from Eq.~(\ref{lemmult2:eq2}) that
\begin{align}
\label{lemmult2:eq3}
\prod_{\x \in V_\CC(\mathcal{I})}|G_{i(\x)}(\x)|^{\mult{\I}(\x)} \in 2^{-\OO(n\delta \Lambda M^{n-1})}.
\end{align}

So, by applying part (i) of the lemma and Eq.~(\ref{lemmult2:eq3}) to Eq.~(\ref{lemmult2:eq1}), we conclude.
\end{proof}
The following corollary, provides an amortized bound on the sum of the logarithms (bitsize) of the Mahler measures of the polynomials $F_{\x}(Y)$, for all $\x \in V_\CC(\I)$ (counting multiplicities).

\begin{corollary}[Amortized Mahler measure]
\label{cor1}
Let $\I$  be the ideal of  Eq.~\eqref{eq:ideals-I-J}.
Then, 
\begin{align*}
\sum_{\x \in V_\CC(\mathcal{I})} \mult{\I}(\x) \log \mathcal{M}(F_{\x}) \in \sOO\left(M^n(n+\tau+d)+nM^{n-1}d\Lambda\right) .
\end{align*}
\end{corollary}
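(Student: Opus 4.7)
My plan is to reduce the bound to a direct application of Lemma~\ref{lem1mult}(i), by expressing $\mathcal{M}(F_{\x})$ in terms of the absolute values of the coefficients of $F$ viewed as a polynomial in $Y$.

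First, I would write
\[
F(\X,Y) \;=\; \sum_{i=0}^{d} a_i(\X)\, Y^i,
\]
where each $a_i \in \ZZ[\X]$ has size $(d,\tau)$ (since $F$ has total degree $d$ and bitsize $\tau$). For every $\x \in V_\CC(\I)$ at which $F_{\x} \neq 0$, Eq.~(\ref{eq:bound_mah1}) yields
\[
\mathcal{M}(F_{\x}) \;\le\; \norm{F_{\x}}_2 \;\le\; \sqrt{d+1}\,\norm{F_{\x}}_\infty \;=\; \sqrt{d+1}\,\max_{0\le i\le d}|a_i(\x)|.
\]
For each such $\x$ I would choose an index $i(\x) \in \{0,\ldots,d\}$ attaining $|a_{i(\x)}(\x)| = \norm{F_{\x}}_\infty$ (in particular $a_{i(\x)}(\x)\neq 0$), so that
\[
\log \mathcal{M}(F_{\x}) \;\le\; \tfrac{1}{2}\log(d+1) + \log |a_{i(\x)}(\x)|.
\]

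Next, I would sum this inequality with weights $\mult{\I}(\x)$. Using $\sum_{\x \in V_\CC(\I)} \mult{\I}(\x) \le M^n$ (B\'ezout, or directly from Lem.~\ref{lem:mult}), the contribution of the $\tfrac{1}{2}\log(d+1)$ term is at most $\tfrac{1}{2}M^n \log(d+1)$, which is absorbed in $\sOO(M^n)$. For the remaining sum $\sum_\x \mult{\I}(\x) \log |a_{i(\x)}(\x)|$, I would invoke Lem.~\ref{lem1mult}(i) with $A = V_\CC(\I)$ and with the family $G_1,\ldots,G_{d+1} = a_0,\ldots,a_d$, each of size $(d,\tau)$; this gives an upper bound in $\sOO\bigl(M^n(n+\tau) + n d\, M^{n-1}\Lambda\bigr)$. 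Adding the two contributions and bounding $\log(d+1) \le d$ to match the advertised form yields the target $\sOO\bigl(M^n(n+\tau+d) + n M^{n-1} d\Lambda\bigr)$.

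The only delicate point is guaranteeing that for every $\x \in V_\CC(\I)$ some coefficient $a_i(\x)$ is nonzero, i.e., that $F_{\x} \neq 0$; equivalently that $F \notin \I$ in $\ZZ[\X,Y]$. This is the standing assumption of the setting (otherwise the Mahler measure, and hence its logarithm, are not defined), and under it the hypothesis of Lem.~\ref{lem1mult}(i) is satisfied by the choice of $i(\x)$. No other step involves nontrivial work: the main ingredient is already packaged in Lem.~\ref{lem1mult}, and the corollary is essentially the statement ``upper bound on $\log \mathcal{M}(F_{\x})$ via $\norm{F_{\x}}_\infty$, then amortize over $V_\CC(\I)$.''
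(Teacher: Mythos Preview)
Your proof is correct and follows essentially the same route as the paper: bound $\mah(F_{\x})$ above by $\sqrt{d+1}\,\|F_{\x}\|_\infty$ via Eq.~(\ref{eq:bound_mah1}), pick the coefficient realizing the maximum, and apply Lem.~\ref{lem1mult}(i) to the family of coefficient polynomials $f_0,\dots,f_d$ of size $(d,\tau)$.

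The one noteworthy difference is that the paper's proof also keeps the \emph{lower} side of Eq.~(\ref{eq:bound_mah1}), namely $\mah(F_{\x})\ge 2^{-d}\|F_{\x}\|_1 \ge 2^{-d}|f_{m(\x)}(\x)|$ with $m(\x)$ the index of the smallest nonzero coefficient, and then bounds that product as well via Lem.~\ref{lem1mult}. This is in fact where the $M^n d$ summand in the announced bound originates (from the factor $2^{-dM^n}$). In other words, the paper establishes a two-sided estimate on $\sum_{\x}\mult{\I}(\x)\log\mah(F_{\x})$, whereas you only prove the upper inequality. For the corollary as literally stated (an $\sOO$ upper bound) your argument suffices; but since $\lc(F_{\x})$ is not an integer, $\log\mah(F_{\x})$ can be negative, and if downstream uses require control of the sum from below you would need to add the lower bound, invoking Lem.~\ref{lem1mult}(ii) on the family $f_0,\dots,f_d$.
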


\begin{proof}
We write $F(\X, Y) = f_{d}(\X)Y^{d} +\dots + f_{0}(\X)$.
For any $\x  \in V_\CC(\mathcal{I})$, 
following  Eq.~(\ref{eq:bound_mah}), 
 it  holds
\begin{align}
\label{cor1:eq1}
2^{-d}\norm{F_{\x}}_1 \le \mah(F_{\x}) \le \norm{F_{\x}}_2\, ,
\end{align}
since the degree of any
 $F_{\x}(Y)$ is $\le d$. Let
\begin{align*}
|f_{M(\x)}(\x) |:= \max_{j\in \{0,\ldots d\}} |f_j(\x)| \, , \\
|f_{m(\x)}(\x) |:= \min_{j\in \{0,\ldots d\} \mid f_j(\x) \neq 0} |f_j(\x)| \, . 
\end{align*}
Now, Eq.~(\ref{cor1:eq1}) gives
\begin{align*}
	2^{-d} \, |f_{m(\x)}(\x)|
	\le \mathcal{M}(F_{\x}) 
	\le \sqrt{d+1} \, |f_{M(\x)} (\x)|\, .
\end{align*}
If we consider for all $\x \in V_\CC(\I)$ (counting multiplicities),
then 
\begin{align}
\label{cor1:eq2}
2^{-dM^n}\prod_{\x \in V_\CC(\mathcal{I})}|f_{m(\x)}(\x)|^{\mult{\I}(\x)} \le \prod_{\x\in V_\CC(\mathcal{I})}\mah(F_{\x})^{\mult{\I}(\x)} 
 \le \nonumber\\ \le \sqrt{d+1}^{M^n} \prod_{\x \in V_\CC(\mathcal{I})}|f_{M(\x)} (\x)|^{\mult{\I}(\x)} \, .
\end{align}
We can bound the products on each side of the inequality in Eq.~(\ref{cor1:eq2}) by Lem.~\ref{lem1mult}. This concludes the proof.
\end{proof}

The following lemma is an analog of Lem.~\ref{lem1mult}, but in the case where we evaluate over $V_\CC(\J)$. 

\begin{lemma}
\label{lem2mult}
Let $\I$ and $\J$ be the ideals of  Eq.~\eqref{eq:ideals-I-J}
and $G_1,\dots , G_m \in \ZZ[X_1,\dots,X_n,Y]$ of sizes $(\delta, \sigma)$.\\
(i) Let $A\subseteq V_\CC(\mathcal{J})$ such that for every $(\x,y) \in A$, there exists an index $i(\x,y) \in [m]$ such that $G_{i(\x,y)} (\x,y) \neq 0$. Then, 
\begin{align*}
\sum_{(\x,y) \in A} \mult{\J}(\x,y) &\log |G_{i(\x,y)}(\x,y)| \\ & \in \sOO \left(M^n(d(n+\sigma) +\delta(n+\tau+d))  +  n\delta d M^{n-1}\Lambda \right).
\end{align*}
(ii) Supposing that for every $(\x,y) \in V_\CC(\mathcal{J})$ there exists an index $i\in [m]$ with $G_i(\x,y)\neq 0$, we denote by $i(\x,y)$ the smallest such index. Then,
\begin{align*}
\sum_{(\x,y) \in V_\CC(\mathcal{J})} \mu(\x,y)\big| & \log (|G_{i(\x,y)}(\x,y)|)\big| \\ & \in \sOO \left(M^n(d(n+\sigma) +\delta(n+\tau+d))  +  n\delta d M^{n-1}\Lambda \right) .
\end{align*}
\end{lemma}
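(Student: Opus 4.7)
The plan is to lift the arguments of Lem.~\ref{lem1mult} to the variety $V_\CC(\J)$, treating the extra variable $Y$ separately via Lem.~\ref{lem:mult}. For part (i), at each $(\x,y) \in A$ the crude monomial bound gives $|G_{i(\x,y)}(\x,y)| \le \binom{\delta+n+1}{n+1}\, 2^\sigma \prod_{j=1}^n \max(1,|x_j|)^\delta \cdot \max(1,|y|)^\delta$. Taking logs, weighting by $\mult{\J}(\x,y)$, and summing splits the target sum into a constant ($\sigma$) part, $n$ parts indexed by $j \in [n]$ (the $x_j$-contributions), and one $y$-part. Using $\mult{\J}(\x,y) = \mult{\I}(\x)\mult{F_\x}(y)$ and $\sum_y \mult{F_\x}(y) \le d$ we have $\sum_{(\x,y) \in V_\CC(\J)} \mult{\J}(\x,y) \le d M^n$, so the constant part contributes $\sOO(d M^n (n+\sigma))$.

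For each $x_j$-contribution the factorization of $\mult{\J}$, combined with $\sum_y \mult{F_\x}(y) \le d$, bounds the summand by $d \sum_\x \mult{\I}(\x) \log\max(1,|x_j|)$, which is precisely the object already bounded in Lem.~\ref{lem1mult}(i) by a multiple of $M^{n-1}\log \mathcal{M}(F_j)$; multiplying by $\delta$ and summing over $j$ yields an overall $\sOO(n\delta d M^{n-1}\Lambda)$ contribution. For the $y$-part I would invoke $\mathcal{M}(F_\x) = |\ell_\x|\prod_y \max(1,|y|)^{\mult{F_\x}(y)}$, with $\ell_\x$ the leading coefficient of $F_\x$ as a polynomial in $Y$, to bound $\sum_{(\x,y)}\mult{\J}(\x,y)\log\max(1,|y|) \le \sum_\x \mult{\I}(\x)\log\mathcal{M}(F_\x) + \sum_\x \mult{\I}(\x)|\log|\ell_\x||$. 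Cor.~\ref{cor1} handles the first sum; for the second, I apply Lem.~\ref{lem1mult}(ii) to the coefficient sequence $f_d, f_{d-1}, \ldots, f_0 \in \ZZ[\X]$ of $F$ (each of size $(d,\tau)$) listed in decreasing order, so that its \emph{smallest nonzero index} selects the actual $\ell_\x$. Multiplying by $\delta$ gives a $\sOO(\delta M^n(n+\tau+d) + n\delta d M^{n-1}\Lambda)$ contribution, and combining the three pieces yields the claimed bound.

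For part (ii) I mirror Lem.~\ref{lem1mult}(ii): split $V_\CC(\J) = A \cup A^c$ with $A = \{(\x,y) : |G_{i(\x,y)}(\x,y)| \ge 1\}$, use (i) for the positive contribution, and derive a matching lower bound on $\prod_{(\x,y) \in V_\CC(\J)} |G_{i(\x,y)}(\x,y)|^{\mult{\J}(\x,y)}$ via a resultant. Introduce $G(\X,Y,U) := \sum_{j=1}^m G_j(\X,Y)\, U^{j-1}$ and consider $R(U) := \res_{\X,Y}(F_1,\ldots,F_n,F,G) \in \ZZ[U]$. By the Poisson formula, $R(U)$ factors as an explicit product $C$ of powers of the leading coefficients of $F_1,\ldots,F_n,F$ times $\prod_{(\x,y)} G(\x,y,U)^{\mult{\J}(\x,y)}$; the constant coefficient of $R$ in $U$ is a nonzero integer (by the hypothesis $G_{i(\x,y)}(\x,y) \neq 0$), hence has absolute value at least $1$, yielding the lower bound after dividing by $|C|$. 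The main obstacle is controlling $|C|$: its logarithm should be $\sOO(n\delta d M^{n-1}\Lambda + \delta M^n \tau)$, which requires carefully tracking the products of degrees $\deg F_i$, $\deg F$, $\deg_{\X,Y} G$ appearing as exponents in the Poisson formula. A minor subtlety is that $\ell_\x$ may not equal $f_d(\x)$ when $f_d(\x)=0$, which is why the $f_j$'s are listed in decreasing order when invoking Lem.~\ref{lem1mult}(ii).
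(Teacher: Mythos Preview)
Your proof of part~(i) is correct and coincides with the paper's: the same monomial bound, the same three pieces, and the same use of Cor.~\ref{cor1} together with Lem.~\ref{lem1mult} applied to the coefficient list $f_d,\ldots,f_0$ to control the leading-coefficient term $\ell_\x$. The paper phrases the $y$-part multiplicatively (multiplying both sides by $\prod_\x |\lc(F_\x)|^{\delta\mult{\I}(\x)}$ and then dividing it back out), but the content is identical.

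For part~(ii) the overall scheme (split on $|G_{i(\x,y)}|\ge 1$, then get a lower bound via an integer resultant in the auxiliary variable $U$) is the paper's, but the paper does \emph{not} take the single $(n{+}1)$-variable resultant you propose. Instead it eliminates in two steps: first $Q(\X,U):=\res_Y(G(\X,Y,U),F(\X,Y))$, then $\res_\X(Q,F_1,\ldots,F_n)$. The payoff is that the extraneous factor becomes completely explicit,
\[
\Big(\prod_{j}\lc(F_j)^{\prod_{k\ne j}\deg F_k}\Big)^{\OO(d\delta)}\cdot\prod_{\x\in V_\CC(\I)} f_d(\x)^{\OO(\delta)\mult{\I}(\x)},
\]
and the second product is bounded by a direct appeal to Lem.~\ref{lem1mult}(i). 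In your one-shot $\res_{\X,Y}(F_1,\ldots,F_n,F,G)$, the Poisson factor is not literally ``a product of powers of the leading coefficients of $F_1,\ldots,F_n,F$'': it is $\res(\bar F_1,\ldots,\bar F_n,\bar F)^{\OO(\delta)}$, where $\bar F$ is the leading \emph{form} of $F$ in total degree, not $f_d(\X)$. Because the $\bar F_i$ are monomials in single variables, a second Poisson step reduces this to a power of the coefficient of $Y^d$ in $F$ (together with the $\lc(F_j)$ powers), so your route does close---but only after exactly the unwinding the paper performs, and under the same WLOG assumption that this coefficient does not vanish on $V_\CC(\I)$. So the difference is cosmetic rather than conceptual; the two-step elimination just makes the bookkeeping you flag as ``the main obstacle'' transparent.
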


\begin{proof}
(i) For any $(\x, y)\in A$, 
\begin{align*}
	 |G_{i(\x,y)}(\x,y)| \le {{\delta+n+1}\choose{n+1}} 2^{\sigma} \prod_{i=1}^n\max\{1,|x_i|\}^{\delta} \max\{1,|y|\}^{\delta}\, ,
\end{align*}
since the number of monomials in $\ZZ[X_1,\ldots, X_n,Y]$ of degree less than or equal to  $\delta$ is ${\delta+n+1}\choose{n+1}$.
We have that:
\begin{align}
\label{eq:lemmult2-1}
		 \prod_{(\x,y)\in A} \bigg({{\delta+n+1}\choose{n+1}} 2^{\sigma} \bigg)^{\mult{\J}(\x,y)} 
 \in 2^{\sOO(M^nd(n+\sigma))}\, ,
	\end{align}
since $\sum_{(\x,y)\in A} \mult{\J}(\x,y) \le M^nd $.
For $j=1,\dots,n$:
        \begin{align*}
          	\prod_{(\x,y)\in A} \max\{1,|x_j|\}^{\delta \mult{\J}(\x,y)}    \le  \prod_{(\x,y)\in V_\CC(\mathcal{J})} \max\{1,|x_j|\}^{\delta \mult{\J}(\x,y)}  \\
	 =  \prod_{(\x,y)\in V_\CC(\mathcal{J})} \max\{1,|x_j|\}^{\delta \mult{F_j}(x_j) \mult{\I_{-j}}(\x_{-j}) \mult{F_{\x}}(y)} \\
\le  \prod_{\x \in V_\CC(\mathcal{I})} \max\{1,|x_j|\}^{\delta \mult{F_j}(x_j) \mult{\I_{-j}}(\x_{-j}) d}  \\
 = \prod_{x_j\in V_\CC(F_j)} \max\{1,|x_j|\}^{\delta \mult{F_j}(x_j) \sum_{\x_{-j} \mid \x \in V_\CC(\I)} \mult{\I\setminus F_j}(\x_{-j}) d } \\ 
 \le   \prod_{x_j\in V_\CC(F_j)} \max\{1,|x_j|\}^{\delta  \mult{F_j}(x_j) M^{n-1}d}  
 \le \mah(F_j)^{\delta M^{n-1}d} ,
	\end{align*}
where the first equality follows from Lem.~\ref{lem:mult} and the third inequality from the fact that $\sum_{\x_{-j} \mid \x \in V_\CC(\I)} \mult{\I \setminus F_j}(\x_{-j}) \le M^{n-1}$. Note that the last inequality is true since the coefficients of $F_j$ are in $\ZZ$.
Since $\mah(F_j) \in 2^{\OO(\Lambda +\log M)}$, we have that
\begin{align}
\label{eq:lemmult2-2}
		 \prod_{(\x,y)\in A} \max\{1,|x_j|\}^{\delta \mult{\J}(\x,y)}  \in 2^{\sOO(\delta M^{n-1}d\Lambda )}.
\end{align}

Lastly, we have that
\begin{align*}
\prod_{\x\in V_\CC(\I)} |\lc(F_{\x}(Y)|^{\delta \mult{\I}(\x)} \cdot \prod_{(\x,y)\in A} \max\{1,|y|\}^{\delta \mult{\J}(\x,y)}  \nonumber   \le \\ \le   \prod_{\x\in V_\CC(\I)} |\lc(F_{\x}(Y)|^{\delta \mult{\I}(\x)} \left( \prod_{y\in V_\CC(F_{\x})} \max\{1,|y|\}^{\mult{F_{\x}}(y)} \right)^{\delta \mult{\I}(\x)} \le \nonumber\\ 
 \le \prod_{\x \in V_\CC(\mathcal{I})} \mathcal{M}(F_{\x})^{\delta \mult{\I}(\x)} \,  \in  2^{\sOO(M^n\delta(n+\tau+d)+M^{n-1} \delta n d \Lambda)} \, ,
\end{align*}


which follows from Cor.\ref{cor1}. Also, from Lem.~\ref{lem1mult} we can bound the size of the factor $\prod_{\x\in V_\CC(\I)} |\lc(F_{\x}(Y)|^{\delta \mult{\I}(\x)}$ on the left-hand side of  the previous equation, and thus, we have that 
\begin{align}
\label{eq:lemmult2-4}
 \prod_{(\x,y)\in A} \max\{1,|y|\}^{\delta \mult{\J}(\x,y)}  \in 2^{\sOO(M^n\delta(n+\tau+d)+M^{n-1} \delta n d \Lambda)}.
\end{align}

 By putting together Eq.~(\ref{eq:lemmult2-1}), Eq.~(\ref{eq:lemmult2-2}) and Eq.~(\ref{eq:lemmult2-4}) we can conclude.

(ii)  
As in the proof of Lem.~\ref{lem1mult}, by the first part of the lemma for $A = \{(\x, y) \mid |G_{i(\x, y)}(\x, y)| \ge 1\}$, we just need to find a lower bound for 
$
\sum_{(\x,y) \in V_\CC(\mathcal{J})} \mult{\I}(\x)\mult{F_{\x}}(y) \log (|G_{i(\x,y)}(\x,y)| )\, . 
$
Let $G(\X,Y,U) := G_1(\X,Y) + G_2(\X,Y) U + \dots +G_m(\X,Y)U^{m-1}$.
Let also $Q(\X,U):= \res_Y(G(\X,Y,U), F(\X,Y))$, be the resultant where we eliminate $Y$. 
Without loss of generality, we assume that the leading coefficient of $F(X_1,\dots, X_n, Y)$ when considered as a polynomial in $\ZZ[X_1,\dots, X_n][Y]$, is not canceled for any root of $\mathcal{I}$ (in the case where the leading coefficient is cancelled for some roots, $F$ is replaced by a polynomial of smaller degree). So, the resultant is not the zero polynomial.

We consider $\res_{\X}(Q, F_1,\dots,F_n)$, which is now the resultant where we eliminate $\X$. Using the Poisson formula  \cite[Thm.~4.14]{BPR03} we can write
\begin{align*}
\res_{\X}(Q(\X,U), F_1(X_1), \dots, F_n(X_n))   =  \\ =  \Bigg(\prod_{j=1}^n \text{lc}(F_j)^{\prod_{k\in[n], k\neq j} \text{deg}(F_k)} \Bigg)^{\OO(d\delta)}\prod_{\x\in V_\CC(\mathcal{I})}Q(\x,U)^{\mu(\x)} =  \\
= \Bigg(\prod_{j=1}^n \text{lc}(F_j)^{\prod_{k\in[n], k\neq j} \text{deg}(F_k)} \Bigg)^{\OO(d\delta)}\prod_{\x \in V_\CC(\mathcal{I})} f_d(\x)^{\OO(\delta)\mult{\I}(\x)}   \cdot \\ \cdot \prod_{y\mid (\x,y) \in V_\CC(\mathcal{J})} G(\x,y,U)^{\mult{\I}(\x)\mu(y)} \,.
\end{align*}

The absolute value of the constant term of $\res_{\X}(Q(\X,U), F_1(X_1),$ $ \dots, F_n(X_n))\in \ZZ[U]$ is:
 \begin{align*}
 \Bigg|\prod_{j=1}^n \text{lc}(F_j)^{\prod_{k\in[n], k\neq j} \text{deg}(F_k)} \Bigg|^{\OO(d\delta)} \prod_{\x \in V_\CC(\mathcal{I})} |f_d(\x)|^{\OO(\delta)\mult{\I}(\x)} \cdot \\ \cdot \prod_{(\x,y) \in V_\CC(\mathcal{J})} |G_{i(\x,y)}(\x,y)|^{\mult{\I}(\x)\mult{F_{\x}}(y)} \ge 1.
\end{align*}
We have that $\Bigg|\prod_{j=1}^n \text{lc}(F_j)^{\prod_{k\in[n], k\neq j} \text{deg}(F_k)} \Bigg| \in 2^{\OO(n \Lambda M^{n-1})}$.
From Lem.~\ref{lem1mult}~(i) it follows that
\begin{align*}
\prod_{\x \in V_\CC(\mathcal{I})}|f_d(\x)|^{\mult{\I}(\x)} \in 2^{\OO(M^n(n+\tau) + n d   M^{n-1}\Lambda)}\,  
\end{align*}
and thus 
\begin{align}
\label{eq:lowerbnd2}
\prod_{(\x,y) \in V_\CC(\mathcal{J})} |G_{i(\x,y)}(\x,y)|^{\mult{\I}(\x)\mult{F_{\x}}(y)}\in 2^{-\OO(\delta M^n(n+\tau) + n d \delta   M^{n-1}\Lambda)}\,  
\end{align}
So, by combining the first part of the lemma and Eq.~(\ref{eq:lowerbnd2}), we conclude.
\end{proof}

\begin{corollary}[Amortized  bound on $\lgdisc$ and $\lsepa$]
\label{cor2}
Let $\I$ and $\J$ be the ideals of  Eq.~\eqref{eq:ideals-I-J}. Then,
\[
\sum_{\x \in V_\CC(\mathcal{I}) } \mult{\I}(\x) \, \lgdisc(F_{\x}) \in \sOO\left( dM^n(n+\tau+d) +nd^2M^{n-1}\Lambda  \right)
\]
and
\[
\sum_{\x\in V_\CC(\mathcal{I}) } \mult{\I}(\x)  \, \lsepa(F_{\x}) \in \sOO\left( dM^n(n+\tau+d) +nd^2M^{n-1}\Lambda \right) \, . 
\]
\end{corollary}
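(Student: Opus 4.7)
The plan is to reduce both bounds to direct applications of Lem.~\ref{lem2mult}~(ii) and Cor.~\ref{cor1}, following the pattern used throughout Sec.~\ref{sec:bounds}. First I would prove the $\lgdisc$ bound. By unfolding the definition of $\lgdisc$ and applying Lem.~\ref{lem:mult}, the left-hand sum rewrites as
\[
\sum_{\x \in V_\CC(\I)} \mult{\I}(\x)\, \lgdisc(F_{\x}) \;=\; \sum_{(\x,y) \in V_\CC(\J)} \mult{\J}(\x,y)\, \Bigl|\log |F_{\x}^{[\mult{F_{\x}}(y)]}(y)|\Bigr|.
\]
Now define $G_k(\X, Y) := \frac{1}{k!}\frac{\partial^k F}{\partial Y^k}(\X, Y) = \sum_{j \ge k} \binom{j}{k} f_j(\X)\, Y^{j-k}$ for $k = 1, \ldots, d$. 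Since the binomial coefficients are integers bounded by $2^d$, each $G_k$ lies in $\ZZ[\X, Y]$ with size $(d, O(\tau + d))$. For every $(\x, y) \in V_\CC(\J)$ set $i(\x, y) := \mult{F_{\x}}(y)$; by the definition of root multiplicity, $G_{i(\x, y)}(\x, y) = F_{\x}^{[\mult{F_{\x}}(y)]}(y) \neq 0$, so the hypothesis of Lem.~\ref{lem2mult}~(ii) is met. Applying it with $\delta = d$ and $\sigma = O(\tau + d)$ yields exactly $\sOO(dM^n(n + \tau + d) + nd^2 M^{n-1}\Lambda)$.

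For the $\lsepa$ bound I would apply Prop.~\ref{prop:sep} termwise: when $|\lc(F_{\x})| \ge 1$, one has $\lsepa(F_{\x}) \in \OO(d^2 + d \log \mah(F_{\x}) + \lgdisc(F_{\x}))$. Multiplying by $\mult{\I}(\x)$ and summing, the three resulting contributions are controlled by $d^2 \sum_\x \mult{\I}(\x) \le d^2 M^n$, by $d$ times the bound of Cor.~\ref{cor1}, and by the $\lgdisc$ bound just obtained. Each of these falls within $\sOO(dM^n(n + \tau + d) + nd^2 M^{n-1}\Lambda)$, matching the target.

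The main obstacle is the normalization hypothesis $|\lc(F_{\x})| \ge 1$ of Prop.~\ref{prop:sep}, which may fail for individual $\x \in V_\CC(\I)$. The fix is to apply the proposition to the rescaled polynomial $\tilde F_{\x} := F_{\x}/\lc(F_{\x})$: separation is scale-invariant, so $\lsepa(\tilde F_{\x}) = \lsepa(F_{\x})$, while $\log \mah(\tilde F_{\x}) = \log\mah(F_{\x}) - \log|\lc(F_{\x})|$ and, since $\tilde F_{\x}^{[k]}(y) = F_{\x}^{[k]}(y)/\lc(F_{\x})$, the triangle inequality gives $\lgdisc(\tilde F_{\x}) \le \lgdisc(F_{\x}) + d\, |\log|\lc(F_{\x})||$. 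As in the proof of Lem.~\ref{lem2mult}, we may assume without loss of generality that the leading coefficient $f_d \in \ZZ[\X]$ of $F$ (viewed in $\ZZ[\X][Y]$) does not vanish on $V_\CC(\I)$, so $\lc(F_{\x}) = f_d(\x)$ and the correction $\sum_\x \mult{\I}(\x)\, |\log|f_d(\x)||$ is controlled by Lem.~\ref{lem1mult}~(ii) applied to the single polynomial $f_d$ of size $(d, \tau)$; the resulting $\sOO(M^n(n + \tau) + ndM^{n-1}\Lambda)$ is absorbed into the target bound, completing the argument.
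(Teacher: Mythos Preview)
Your proposal is correct and follows essentially the same route as the paper: rewrite the $\lgdisc$ sum over $V_\CC(\J)$ via Lem.~\ref{lem:mult}, apply Lem.~\ref{lem2mult}(ii) to the family $\{F^{[k]}\}_{k}$, and then derive the $\lsepa$ bound from Prop.~\ref{prop:sep} combined with Cor.~\ref{cor1} and the first part. Your treatment of the normalization hypothesis $|\lc(F_{\x})|\ge 1$ in Prop.~\ref{prop:sep} (rescaling by $\lc(F_{\x})$ and controlling the correction via Lem.~\ref{lem1mult}(ii)) is a detail the paper omits, so your write-up is in fact slightly more careful than the original.
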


\begin{proof}
We can write
\begin{align*}
\sum_{\x \in V_\CC(\mathcal{I}) } \mult{\I}(\x) \lgdisc(F_{\x})  = \\ =   \sum_{\x\in V_\CC(\mathcal{I}) } \mult{\I}(\x) \sum_{y \in V_\CC(F_{\x})} \mult{F_{\x}}(y)| \log (|F_{\x}^{[\mult{F_{\x}}(y)]}(y)|)|  =  \\ = \sum_{(\x,y)\in V_\CC(\mathcal{J}) } \mult{\J}(\x,y)| \log (|F_{\x}^{[\mult{F_{\x}}(y)]}(y)|)| 
\end{align*}
and then apply Lem.~\ref{lem2mult} for the family of polynomials $F_{\x}^{[k]}$, for $k=0,\dots, d$. These polynomials are of size $(d,\sOO(\tau))$, therefore the first part follows.
The second part is an immediate consequence of Prop.~\ref{prop:sep}, Cor.~\ref{cor1} and the first part of this corollary.
\end{proof}

\section{Solving in a Multiple Field Extension}
\label{sec:solving}

In this section, we study the complexity of isolating the roots of the system in Eq.~(\ref{eq:introsystem}).
We first solve the univariate polynomials of the system and then, for every $\x \in V_\CC(\I)$, we will isolate the roots of $F_{\x}$.
Following \cite{Diatta2022}, we employ the univariate root isolation algorithm of Prop.~\ref{prop:solv}. The main result of this section is summarized in the theorem that follows.

\begin{theorem}
\label{thm:complexity}
(i) If the number of distinct roots of $F_{\x}(Y)$ for every $\x\in V_\CC(\I)$ is known, then we compute isolating discs for all the roots and the corresponding multiplicities in 
{\small
\begin{align*}
\sOB \Big(  nM^{n+1}d(n+\tau+d)  +nM^{n}d^2(n\Lambda +d^n(n+\tau+d+\Lambda))  +n^2d^{n+3}M^{n-1}\Lambda   \Big)\, . 
\end{align*}}
(ii) If the number of distinct roots is not known,  then we compute isolating discs for all the roots, together with the corresponding multiplicities in 
{\small
\begin{align*}
\sOB \bigg(\max(M, d^2)^{n-1}\left \lceil{\frac{M}{d^2}}\right \rceil((nM^n+n^2)d^5(\tau+n)  +n^2 \, d^6\, \Lambda (M^{n-1}+n)  ) +
\\+nM^{n}d^{n+2}(n+\tau+d+\Lambda)+n^2d^{n+3}M^{n-1}\Lambda   \bigg)\, . 
\end{align*}}

\end{theorem}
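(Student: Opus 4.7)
The plan is to first isolate the roots of the univariate $F_1,\dots,F_n$, thus obtaining $V_\CC(\I)$, and then for each $\x\in V_\CC(\I)$ invoke Prop.~\ref{prop:solv} on the univariate polynomial $F_\x(Y)$. Recall that Prop.~\ref{prop:solv} requires an approximation of the coefficients of $F_\x$ at precision $L_\x\in\sOO(d\log\mah(F_\x)+\lsepa(F_\x)+\lgdisc(F_\x))$ and, for~(i), the number of distinct roots of $F_\x$. Combining the definition of $L_\x$ with Cor.~\ref{cor1}--\ref{cor2} yields the aggregate estimate $\sum_\x L_\x\in\sOO(dM^n(n+\tau+d)+nd^2M^{n-1}\Lambda)$, which is what will turn per-point cost estimates into the claimed global bound.

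For~(i), the proof amounts to adding up three blocks. \emph{First,} one refines the roots of $F_1,\dots,F_n$ adaptively, each root only to the maximum precision demanded by the $\x$'s in which it appears as a coordinate. A standard root refinement costs $\sOB(M\,L)$ per root to precision $L$, so, using that each $\x$ contributes to exactly $n$ such maxima, the block totals $\sOB(nM\sum_\x L_\x)=\sOB(nM^{n+1}d(n+\tau+d)+n^2M^nd^2\Lambda)$. \emph{Second,} one evaluates the $d+1$ coefficients of $F\in\ZZ[\X,Y]$ at each $\x$ with its own precision $L_\x$: a single nested Horner evaluation of an $n$-variate, total-degree-$d$, bitsize-$\tau$ polynomial at one point with precision $L$ costs $\sOB(d^n(L+\tau+nd\Gamma))$, where $\Gamma=\sOO(\Lambda)$ bounds $\log|x_i|$. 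Summing over $\x$ and over the $d+1$ coefficients and applying the same aggregate estimate contributes $\sOB(nM^nd^{n+2}(n+\tau+d+\Lambda)+n^2d^{n+3}M^{n-1}\Lambda)$. \emph{Third,} one applies Prop.~\ref{prop:solv} to each $\tilde F_\x$ with the given number of distinct roots; its cost $\sOB(d^3+d^2\log\mah(F_\x)+d\lgdisc(F_\x))$ summed over the at most $M^n$ distinct $\x$ is absorbed into the previous two blocks via Cor.~\ref{cor1}--\ref{cor2}. Adding the three contributions yields~(i).

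For~(ii) we prepend one block that computes, for every $\x$, the number of distinct roots of $F_\x$, which equals $d-\deg\gcd(F_\x,F_\x')$. We precompute symbolically the principal subresultant coefficients $\sres_k(F,\partial F/\partial Y)\in\ZZ[\X]$ for $k=0,\dots,d-1$: these are at most $d$ polynomials of total degree $O(d^2)$ and bitsize $\sOO(d\tau)$. For each $\x$ we approximate their values up to the precision needed to decide their sign; the smallest index of a non-vanishing coefficient equals $\deg\gcd(F_\x,F_\x')$, as formalised in Lem.~\ref{lem:numberofroots}. Because only sign decisions are needed per point, multipoint evaluation via Prop.~\ref{prop:meval} becomes the right tool: applied to polynomials of degree $d^2$ per variable and bitsize $\sOO(d\tau)$, with $|S_i|\le M$ and $\Gamma=\sOO(\Lambda)$, it contributes exactly the first summand of the bound in~(ii); the remaining two summands reproduce the evaluation and solving blocks of~(i).

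The main obstacle will be calibrating the precision demands tightly. In block~(i) one must use per-point nested Horner evaluation rather than one uniform multipoint evaluation for the coefficients of $F$: the latter at $L_{\max}=\max_\x L_\x$ would pay a factor $M^n$ blow-up, whereas Horner accommodates different precisions at different points and lets the aggregate bounds be invoked point by point. In~(ii) the corresponding difficulty is to derive an aggregate bound for $\sum_\x\mult{\I}(\x)\bigl|\log|\sres_k(\x)|\bigr|$---the role of Lem.~\ref{lem:numberofroots}, obtained by applying Lem.~\ref{lem2mult} to the subresultants viewed as polynomials of size $(d^2,\sOO(d\tau))$---so that the precision needed to decide the sign of each $\sres_k(\x)$ is controlled on average.
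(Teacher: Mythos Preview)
Your outline matches the paper's proof closely: the same three blocks for~(i) (root refinement, per-point coefficient evaluation via a Horner-type scheme at individually tuned precisions, then Prop.~\ref{prop:solv}), aggregated via Cor.~\ref{cor1}--\ref{cor2}; and for~(ii) the same prepended subresultant-sign computation, pointing to Lem.~\ref{lem:numberofroots} and Prop.~\ref{prop:meval}. Two minor points you glossed over that the paper makes explicit: the normalisation $\tilde F_\x=2^{-\tau_\x}F_\x$ needed so that Prop.~\ref{prop:solv} applies, and the fact that $\deg F_\x$ may depend on $\x$, which forces the subresultant computation to be repeated for each truncation $F_\ell$, $\ell=0,\dots,d$.

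There is one genuine misstep in your last paragraph. The precision needed to decide the sign of $\sres_k(\x)$ in Lem.~\ref{lem:numberofroots} is \emph{not} obtained from an aggregate bound: multipoint evaluation (Prop.~\ref{prop:meval}) takes a single uniform precision $L$, so an average control on $\sum_\x|\log|\sres_k(\x)||$ is of no direct use---you need a worst-case lower bound on each $|\sres_k(\x)|$. The paper supplies this via Lem.~\ref{lem:boundsB} (a per-point Cauchy-type bound through a resultant), giving $L\in\sOO(M^n d(\tau+n)+nM^{n-1}d^2\Lambda)$, and it is this worst-case $L$ that is plugged into Prop.~\ref{prop:meval}. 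Separately, even if an aggregate bound were relevant here, the right tool would be Lem.~\ref{lem1mult} (evaluation over $V_\CC(\I)$), not Lem.~\ref{lem2mult} (which concerns $(n{+}1)$-variate polynomials over $V_\CC(\J)$): the principal subresultant coefficients live in $\ZZ[\X]$, not $\ZZ[\X,Y]$.
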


\begin{remark}
\label{rem:simplified}
When $M=d$ and $\Lambda = \tau$, the bit-complexity bounds of the previous theorem become
\[ \sOB( nd^{2n+2}(d+n\tau)) \, , \]
when the number of distinct roots of $F_{\x}(Y)$ is known for every $\x \in V_\CC(\I)$ (or when $F$ is squarefree) and
$$\sOB\left(n^2d^{3n+3}\tau +n^3 d^{2n+4}\tau \right)\, , $$
otherwise.
\end{remark}
\vspace{-0.25cm}
\begin{remark}[Number of distinct roots, numerically vs formally]
Determining the number of distinct roots of every $F_{\x}$ in Thm.~\ref{thm:complexity} dominates the total complexity.
When $n=1$, a formal method involving univariate gcd computations can be used to find this number; the initial system is triangular and it can be efficiently decomposed into regular triangular systems, for whom the number of distinct roots over every $\x$ is constant \cite{Diatta2022,LAZARD2017123}. However, when $n>1$ the system is not triangular and decomposing it to a set of regular triangular systems (cf. extended gcd computation \cite{10.1145/1113439.1113457}) as for the case $n=1$, and thus loosing the original shape of the system, would require isolating roots of a triangular system in $n$ variables. The latter problem is substantially more demanding than isolating the roots of $n$ univariate polynomials. 
For this reason, we will follow a numerical approach to find the number of distinct roots of every $F_{\x}$.

\end{remark}

In the remark that follows, we compute the complexity of isolating the roots of the system of Eq.~(\ref{eq:introsystem}) using the algorithm of approximate factorization of Pan \cite{PAN2002701} with the maximal precision; instead of requiring the number of distinct roots of a univariate polynomial, if we approximate its roots with precision up to the separation bound, then the root approximations that are have pairwise distances smaller that the separation bound, will correspond to the same root. So, this is a method that avoids computing the number of distinct roots of the univariate polynomials $F_{\x}$. Nevertheless, it is a theoretical approach which brings about practical limitations in contrast to our adaptive method. 

\begin{remark}[Pan's algorithm with maximal precision]
\label{rem:pan}
On isolating the roots of $F_{\x}$ for every $V_\CC(\mathcal{I})$, instead of employing the algorithm of Prop.~\ref{prop:solv}, that requires knowing the number of distinct roots, we can use Pan's algorithm of approximate factorization with precision up to the separation bound of the roots of the initial system of Eq.~(\ref{eq:introsystem}).
Then, for every $\x \in V_\CC(\I)$, we approximate $F(\x,Y)$ up to $ \sum_{\x\in V_\CC(\mathcal{I})}\mult{\I}(\x)\, \lsepa(F_{\x}) \in \sOO(M^nd(n+\tau+d) +nd^2M^{n-1}\Lambda)  )$ bits. From Prop.~\ref{prop:meval} this is done in 
{\small
\begin{align*}
    \sOB\left( \max(M, d)^{n-1}\left \lceil{\frac{M}{d}}\right \rceil(nM^nd^2(n+\tau+d) +n^2d^3M^{n-1}\Lambda +n^2 d \tau +n^3 \, d^2 \tau  )\right)
\end{align*}}
since there are $d$ polynomials to evaluate.
Then, for every $F_{\x}(Y)$, Pan's algorithm runs in
\begin{align*}
    \sOB\left( M^{2n}d^2(n+\tau+d) +nd^3M^{2n-1}\Lambda\right)\, 
\end{align*}
and returns isolating intervals for all the roots. 
\end{remark}

Before presenting the proof of Thm.~\ref{thm:complexity}, we need some intermediate results.
The next lemma, gives upper and lower bounds on the evaluation of a polynomial over an algebraic number. For simplicity, we ignore the logarithmic factors. 
\begin{lemma}
\label{lem:boundsB}
For every $\x \in V_\CC(\I)$ and a polynomial $b(\X)\in \ZZ[\X]$ of size $(\delta, \sigma)$, it holds that
$$ 2^{-\sOO( M^n(\sigma +n) + nM^{n-1}\delta\Lambda)} \le |b(\x)| \le 2^{ \sOO( n\delta \Lambda + \sigma )}.$$
\end{lemma}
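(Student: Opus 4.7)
My plan is to establish the upper and lower bounds by two independent, largely routine arguments, each building on material already developed in the paper. The upper bound follows by a direct estimate, while the lower bound is essentially a single-point corollary of Lem.~\ref{lem1mult}(ii) after a small trick to make it applicable.

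For the upper bound, I would expand $b(\X)$ as a sum of at most $\binom{\delta+n}{n}$ monomials whose coefficients have absolute value at most $2^\sigma$. For each coordinate $x_i$ of $\x \in V_\CC(\I)$, the definition of the Mahler measure gives $\max\{1,|x_i|\} \le \mah(F_i)$, and Eq.~(\ref{eq:bound_mah}) bounds $\mah(F_i) \le \norm{F_i}_2 \le 2^{\Lambda+\log(M+1)}$. Plugging in yields
\[
|b(\x)| \;\le\; \binom{\delta+n}{n} \cdot 2^\sigma \cdot 2^{n\delta(\Lambda+\log(M+1))} \;\in\; 2^{\sOO(n\delta \Lambda + \sigma)},
\]
which is the claimed upper bound.

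For the lower bound, observe that its strict positivity implicitly presupposes $b(\x) \neq 0$. The natural idea of applying Lem.~\ref{lem1mult}(ii) to the one-element family $\{b\}$ fails in general, because that would require $b$ to be nonzero on \emph{all} of $V_\CC(\I)$, which we cannot assume. My plan is instead to apply Lem.~\ref{lem1mult}(ii) to the augmented family $(G_1, G_2) = (b, 1)$, whose sizes are both bounded by $(\delta,\sigma)$; the dummy constant $G_2 = 1$ is included solely to guarantee that the nonvanishing hypothesis of (ii) is satisfied at every point of $V_\CC(\I)$, without affecting the final size bound. At the given $\x$ with $b(\x)\neq 0$, the smallest index with $G_{i(\x)}(\x)\neq 0$ is $i(\x)=1$, so the corresponding contribution is exactly $\mult{\I}(\x)\,|\log|b(\x)||$. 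Since every term of the sum in Lem.~\ref{lem1mult}(ii) is nonnegative, this single term is bounded by the full sum, giving $|\log|b(\x)|| \in \sOO(M^n(\sigma+n) + nM^{n-1}\delta\Lambda)$ and hence $|b(\x)| \ge 2^{-\sOO(M^n(\sigma+n) + nM^{n-1}\delta\Lambda)}$. The only mild subtlety is precisely this dummy-polynomial trick; all other steps are routine bookkeeping on the exponents hidden under $\sOO$.
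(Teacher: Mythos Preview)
Your proof is correct. The upper bound argument is essentially identical to the paper's (you use the Mahler measure bound from Eq.~(\ref{eq:bound_mah}) where the paper invokes the Cauchy bound, but both give $\max(1,|x_i|)\in 2^{\OO(\Lambda)}$).

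For the lower bound you take a genuinely different route. The paper argues from scratch via the $u$-resultant: it forms $F_0(\X,Y)=Y-b(\X)$, computes the eliminant $R(Y)=\res_{\X}(F_0,F_1,\dots,F_n)\in\ZZ[Y]$, bounds its height through the DMM/mixed-volume machinery, and then applies the Cauchy root bound to $R$. Your argument instead recycles Lem.~\ref{lem1mult}(ii): by padding the family to $(b,1)$ you force the nonvanishing hypothesis globally, and then the single nonnegative term $\mult{\I}(\x)\,|\log|b(\x)||$ at the given point is dominated by the amortized sum. This is shorter and more modular, since the resultant computation is already hidden inside the proof of Lem.~\ref{lem1mult}(ii). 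The paper's direct approach, on the other hand, produces as a by-product an explicit height bound for the eliminant polynomial $R(Y)$, which is of independent use (and indeed the same construction reappears later in the paper's complexity analyses).
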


\begin{proof}
For the upper bound, we have that 
\begin{align*}
|b(\x)|  \le {{\delta + n}\choose{n}} 2^\sigma \prod_{i\in [n]} \max(1, |x_i|)^\delta  \le {{\delta + n}\choose{n}} 2^\sigma 2^{\delta \OO(\Lambda)  n },
\end{align*}
since, for $i\in [n]$, $\max(1, |x_i|) \le 2^{\OO(\Lambda)}$ from the Cauchy bound \cite[Lem.~10.2]{BPR03}.
For the lower bound, we follow the technique of $u$-resultant and consider the system:
\begin{align}
\label{eq:lbound}
F_0(\X,Y) = F_1(X_1) = \dots = F_n(X_n) = 0,
\end{align}
where $F_0(\X,Y) = Y - b(\X)$ and $Y$ is a new variable.  
We consider the resultant of the previous system that eliminates $\X$. 
Then, 
\begin{align*}
R(Y):=\res_{\X}(F_0, \dots, F_n) = \lc(R) \prod_{\a \in V_\CC(\I)} (Y - b(\a))^{\mult{\I}(\a)} \in \ZZ[Y].
\end{align*}

We will find an upper bound on the bitsize of $R$. To this scope, we follow the proof of the DMM bound in \cite{DMM-j-2020} (see also the proof of the sparse resultant's height bound in \cite{10.1145/3087604.3087653} ). 
For $i=0,\dots,n$, let $Q_i$ be the Newton polytope of $F_i$. We denote by $\# Q_i$ the number of lattice points in the closed polytope $Q_i$ and by $M_i$ the mixed volume of all these polytopes except from $Q_i$.
The resultant $R$ is a univariate polynomial in $Y$, with coefficients homogeneous polynomials in the coefficients of the polynomials of the system in Eq.~(\ref{eq:lbound}):
\begin{align*}
R(Y) = \ldots + \rho_k Y^k \mathbf{c}_{0,k}^{M_0-k} \mathbf{c}_{1,k}^{M_1} \cdots \mathbf{c}_{n,k}^{M_n} + \ldots,
\end{align*}
where $\rho_k \in \ZZ$, $\mathbf{c}_{i,k}^{M_i}$ is a monomial in the coefficients of $F_i$ with total degree $M_i$, for $i\in [n]$, and $\mathbf{c}_{0,k}^{M_0-k}$ is a monomial in the coefficients of $F_0$ of total degree $M_0-k$.
It holds that 
\begin{alignat*}{4}
| \mathbf{c}_{1,k}^{M_1} \cdots \mathbf{c}_{n,k}^{M_n} | &\le & \prod_{i=1}^n \norm{F_i}_\infty^{M_i} & \le  2^{nM^{n-1}\delta\Lambda}\, ,\\
|\mathbf{c}_{0,k}^{M_0-k}| &\le & \norm{F_0}_\infty^{M_0-k}& \le 2^{\sigma(M^n-k)}\, ,\\
\rho_k &\le & \prod_{i=0}^n (\# Q_i)^{M_i} & \le ((\delta+1)^n+1)^{M^{n}} \, (M+1)^{nM^{n-1}\delta}.
\end{alignat*}


So, 
$\norm{R}_\infty \le 2^{nM^{n-1}\delta(\Lambda + \log M +1))+ M^n(\sigma + n\log \delta +n+1)}\, . $
Since $R(Y)$ is a polynomial with integer coefficients, from the Cauchy bound \cite[Lem.~10.3]{BPR03}, we have that the absolute value of any of its roots is $ \ge |\mathrm{tc}(R)| \norm{R}_\infty ^{-1}$, where $\mathrm{tc}(R)$ is the tailing coefficient of $R$.
\end{proof}

\begin{lemma}
\label{lem:degree}
For all $\x \in V_\CC(\I)$, we compute the degree of $F_{\x}(Y)$ in  bit-complexity in
{\small $$
\sOB\left( \max(M, d)^{n-1}\left\lceil{\frac{M}{d}}\right\rceil(nd^2(M^n(\tau+n) + n\tau) + n^2d^3\Lambda (M^{n-1}+n) )\right).
$$}
\end{lemma}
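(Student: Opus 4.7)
Write $F(\X,Y) = \sum_{j=0}^{d} f_j(\X)\,Y^j$, where each $f_j \in \ZZ[\X]$ has size $(d,\tau)$. For every $\x \in V_\CC(\I)$, the degree of $F_{\x}(Y)$ is the largest $j$ such that $f_j(\x)\neq 0$. So it suffices, at every $\x\in V_\CC(\I)$ and every $j \in \{0,\dots,d\}$, to decide whether $f_j(\x)$ vanishes or not; this we will do by approximate multipoint evaluation to a precision below the nonzero lower bound of Lem.~\ref{lem:boundsB}.

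The plan is the following. First, apply Lem.~\ref{lem:boundsB} to $f_j$ (of size $(d,\tau)$): whenever $f_j(\x)\neq 0$,
\[
|f_j(\x)| \;\ge\; 2^{-L_0} \quad \text{with} \quad L_0 \in \sOO\bigl(M^n(\tau+n)+ n M^{n-1} d\,\Lambda\bigr).
\]
Hence, computing an absolute $(L_0+1)$-bit approximation $\widetilde{f_j(\x)}$ is enough to decide whether $f_j(\x)=0$ (test $|\widetilde{f_j(\x)}|$ against $2^{-L_0}$), and therefore to read off the degree of $F_{\x}(Y)$.

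Next, observe that $V_\CC(\I) = V_\CC(F_1) \times \cdots \times V_\CC(F_n)$ is a product set, since $\I = \langle F_1(X_1),\dots,F_n(X_n)\rangle$. Each factor $V_\CC(F_i)$ contains at most $M$ points, whose absolute values are bounded by $2^{\OO(\Lambda)}$ by the Cauchy bound \cite[Lem.~10.2]{BPR03}. We can therefore apply Prop.~\ref{prop:meval} to the polynomial $f_j$ (total degree $\le d$, coefficients of absolute value $\le 2^\tau$) with the set $S = V_\CC(F_1)\times\cdots\times V_\CC(F_n)$, taking $\Gamma \in \sOO(\Lambda)$ and $L = L_0+1$. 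The per-polynomial cost is
\[
\sOB\!\left(\max(M,d)^{n-1}\!\left\lceil\tfrac{M}{d}\right\rceil(n d L + n^2 d\,\tau + n^3 d^2\,\Gamma)\right).
\]
Running this for each of the $d+1$ polynomials $f_0,\ldots,f_d$ multiplies the cost by a factor $d$. Substituting $L \in \sOO(M^n(\tau+n)+nM^{n-1}d\Lambda)$ and $\Gamma \in \sOO(\Lambda)$ and collecting terms yields
\[
\sOB\!\left(\max(M,d)^{n-1}\!\left\lceil\tfrac{M}{d}\right\rceil\bigl(nd^2(M^n(\tau+n)+n\tau)+n^2 d^3 \Lambda(M^{n-1}+n)\bigr)\right),
\]
matching the claimed bound. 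The approximations of $f_j(\x)$ at the needed precision are provided by the oracle assumption of Prop.~\ref{prop:meval}, for which the required precision on the coordinates of points in $S$ is $L+\sOO(n\tau + n^2 d\,\Gamma)$, still dominated by $L_0$.

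The main step to watch is the combination of the two ingredients: Lem.~\ref{lem:boundsB} dictates how many bits are needed to certify nonvanishing, and Prop.~\ref{prop:meval} converts the product structure of $V_\CC(\I)$ into an efficient multipoint evaluation on product sets (the coordinate-by-coordinate recursion is crucial, since a naive evaluation at $|V_\CC(\I)|\le M^n$ generic points would be much more expensive). Once these are in place, the bookkeeping is routine and the stated bound follows after multiplying by $d$ for the $d+1$ polynomials $f_j$.
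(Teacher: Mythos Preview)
Your approach is essentially identical to the paper's: use Lem.~\ref{lem:boundsB} to fix the precision $L \in \sOO(M^n(\tau+n)+nM^{n-1}d\Lambda)$ needed to certify nonvanishing of each $f_j(\x)$, then invoke Prop.~\ref{prop:meval} on the product set $V_\CC(\I)=V_\CC(F_1)\times\cdots\times V_\CC(F_n)$ with $\Gamma\in\sOO(\Lambda)$, and multiply by $d$ for the $d+1$ coefficient polynomials.

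There is one step you gloss over that the paper makes explicit. Invoking the ``oracle assumption'' of Prop.~\ref{prop:meval} is not a complete complexity argument: the evaluation points $\x$ are roots of the $F_i$'s and must actually be \emph{computed} to the stated precision $L+\sOO(n\tau+n^2 d\Lambda)$. The paper accounts for this separately, citing \cite[Thm.~5]{MEHLHORN201534} to bound the cost of refining all roots of each $F_i$ to that accuracy by $\sOB(M^3+M^2\Lambda+ML+nMd+n^2 dM\Lambda)$, and then checks that this is dominated by the multipoint-evaluation cost. Your sentence ``still dominated by $L_0$'' addresses only the \emph{precision}, not the \emph{cost} of obtaining it; for a complete proof you should add this root-refinement step and verify the domination.
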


\begin{proof}
To determine which is the first non-zero coefficient of $F_{\x}(Y) = \sum_{i=0}^d f_i(\x) Y^i$, it suffices to approximate its coefficients up to $L$ bits, where
 $L \in \sOO( M^n(\tau + n) + nM^{n-1}d\Lambda)$ (Lem.~\ref{lem:boundsB}). From Prop.~\ref{prop:meval}, this can be done, for all $f_i(X)$ and all $\x \in V_\CC(\I)$, using multipoint evaluation, in 
\begin{align}
\label{eq:compl_degree}
\sOB\left( \max(M, d)^{n-1}\left\lceil{\frac{M}{d}}\right\rceil(nd^2L +n^2 d^2 \tau +n^3 \, d^3\Lambda  )\right)
\end{align}
bit-operations.
This requires approximations of every $\x$ to bit-accuracy at most 
$ \sOO \left( L+ nd +n^2 d \Lambda \right)$, which is done 
for all $\x \in \mathcal{V}_\CC(\I)$ in $\sOB( M^3  + M^2 \Lambda +M L +n M d + n^2 d M \Lambda)$ \cite[Thm.5]{MEHLHORN201534}.
The total bit-complexity is dominated by the one in Eq.~(\ref{eq:compl_degree}). We substitute the upper bound for $L$ to  conclude.
\end{proof}

\begin{lemma}
\label{lem:numberofroots}
For all $\x \in V_\CC(\I)$, we compute the number of distinct complex roots of $F_{\x}(Y)$ in bit-complexity in 
{\small
$$
\sOB\left( \max(M, d^2)^{n-1}\left \lceil{\frac{M}{d^2}}\right \rceil (nd^5(\tau+n)(M^n+n) + n^2 d^6 \Lambda (M^{n-1}+n))\right)\,.
$$}
\end{lemma}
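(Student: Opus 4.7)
The plan is to reduce the task to multipoint evaluation of the principal subresultant coefficients of $F$ and $\partial F/\partial Y$ taken with respect to $Y$. For each $\x \in V_\CC(\I)$, the number of distinct complex roots of $F_{\x}(Y)$ equals $\deg F_{\x} - \deg \gcd(F_{\x}, \partial F_{\x}/\partial Y)$; by the standard specialization property of subresultants, this gcd degree is the smallest index $k$ for which the $k$-th principal subresultant coefficient, evaluated at $\x$, is non-zero. Combined with Lemma~\ref{lem:degree}, which supplies $\deg F_{\x}$, this determines the number of distinct roots of $F_{\x}$.

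First, I compute symbolically the $d$ principal subresultant coefficients $s_0(\X),\ldots,s_{d-1}(\X)\in\ZZ[\X]$. The Sylvester matrix of $F$ and $\partial F/\partial Y$, seen in $(\ZZ[\X])[Y]$, has dimension at most $(2d-1)\times(2d-1)$ with entries of size $(d,\sOO(\tau))$. Applying Hadamard's inequality together with the bound $\binom{d+n}{n}$ on the number of $n$-variate monomials of degree at most $d$ yields that each $s_k$ has total degree $\sOO(d^2)$ in $\X$ and bitsize $\sOO(d(\tau+n))$.

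Next, to identify for each $\x$ the smallest index $k$ such that $s_k(\x)\neq 0$, it is enough to approximate $s_k(\x)$ to a precision that certifies its non-vanishing. By Lemma~\ref{lem:boundsB} applied with $\delta=\sOO(d^2)$ and $\sigma=\sOO(d(\tau+n))$, the precision
$$L=\sOO\bigl(M^n d(\tau+n) + n M^{n-1} d^2 \Lambda\bigr)$$
bits suffices. I then invoke Prop.~\ref{prop:meval} with parameters: degree $\sOO(d^2)$ in each variable, bitsize $\sOO(d(\tau+n))$, one-dimensional evaluation sets of size at most $M$, and $\Gamma=\OO(\Lambda)$ (Cauchy bound on the roots of the $F_i$). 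Summing the resulting bit-complexity over the $d$ subresultants, and observing that refining the coordinates of every $\x\in V_\CC(\I)$ to the precision demanded by Prop.~\ref{prop:meval} via univariate root refinement (as in the proof of Lemma~\ref{lem:degree}) is dominated by the multipoint-evaluation cost, yields the claimed bound.

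The main obstacle is the size analysis of the subresultant coefficients: a loose bitsize estimate, for instance $\sOO(d^2\tau)$, would propagate a spurious factor of $d$ into the final bit-complexity. Keeping the bitsize at $\sOO(d(\tau+n))$ requires a careful row-by-row application of Hadamard's inequality and explicit tracking of the $n$-variate monomial count appearing in the bounds on the entries of the Sylvester matrix; checking that this, together with the specialization property of subresultants at points where the leading coefficient of $F$ in $Y$ may cancel (handled by replacing $F$ with the polynomial of smaller degree determined by Lemma~\ref{lem:degree}), produces the stated bound is the key technical point.
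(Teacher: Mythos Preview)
Your approach coincides with the paper's: compute the principal subresultant coefficients of $F$ and $\partial F/\partial Y$ with respect to $Y$, bound their degree and bitsize, and evaluate them at all $\x\in V_\CC(\I)$ via Prop.~\ref{prop:meval} to the precision supplied by Lem.~\ref{lem:boundsB}. The size estimates ($\delta=\sOO(d^2)$, $\sigma=\sOO(d(\tau+n))$) and the target precision $L$ match the paper's.

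The one substantive gap is the degree-drop case you relegate to a closing caveat. The specialization property of subresultants requires the leading coefficient in $Y$ not to vanish, so the sequence $s_0,\ldots,s_{d-1}$ computed from the formal degree-$d$ polynomial $F$ is simply wrong at any $\x$ with $f_d(\x)=0$: the first index $k$ with $s_k(\x)\neq 0$ need not equal $\deg\gcd(F_{\x},F_{\x}')$. Your remedy, ``replace $F$ by the polynomial of smaller degree'', depends on $\x$; making it uniform forces you to precompute the subresultant sequence of \emph{every} truncation $F_\ell=\sum_{i\le\ell}f_i(\X)Y^i$, $\ell=0,\ldots,d$, and then for each $\x$ use the one with $\ell=\deg F_{\x}$. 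The paper structures the whole proof as this loop over $\ell$ from the outset and carries it through the complexity analysis; the extra factor of $d$ it contributes is exactly what turns the $d^4,d^5$ your single-sequence count produces into the $d^5,d^6$ of the stated bound. Once this loop is made explicit, your argument and the paper's coincide.
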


\begin{proof}
We define the polynomials $F_\ell(\X,Y):= \sum_{i=0}^\ell f_i(\X) Y^i$, for $\ell =0, \dots, d$, 
which are truncated versions of $F(\X,Y)$. 
Since the degree of $F_{\x}$ is not the same for every $\x \in V_\CC(\I)$, we have to repeat the following steps for every $\ell = 0, \dots, d$:

(1) We compute the principal subresultant coefficients of $F_{\ell}(\X, Y)$, $\frac{\partial{F_\ell}}{\partial{Y}} (\X,Y)$ with respect to $Y$. The $j$-th principal subresultant coefficient is a polynomial in $\ZZ[\X]$, denoted by $\sres_j(\X)$, of total degree $\OO\left(\ell(\ell-j)\right)$ and bitsize $\sOO\left((\tau+n)(\ell-j)\right)$ \cite[Prop.8.72]{BPR03}. 
The computation of all principal subresultant coefficients is done in $\sOB\left(\ell^{2n+2} \tau\right)$ \cite[Lem.4]{LAZARD2017123}.

(2) The index of the first non-zero $\sres_j(\x)$ gives the degree of the $\gcd(F_\ell(\x, Y), \frac{\partial F_\ell}{\partial Y} (\x,Y))$.
So, for every $\x \in V_\CC(\I)$, we approximate $\sres_j(\x)$, for $j=0, \dots , \ell$, up to $L$ bits, with $L \in \sOO(M^n\ell(\tau+n) +nM^{n-1}\ell^2 \Lambda)$ (Lem.~\ref{lem:boundsB}),  so as to determine if it zero or not. 
From Prop.~\ref{prop:meval}, this can be done, for all $j \in\{ 0,\dots, \ell\} $ and all $\x \in V_\CC(\I)$, using multi-point evaluation, in 
{\small
\begin{align*}
    \sOB\left( \max(M, \ell^2)^{n-1}\left \lceil{\frac{M}{\ell^2}}\right \rceil (n\ell^4(\tau+n)(M^n+n) + n^2 \ell^5 \Lambda (M^{n-1}+n))\right)\,.
\end{align*}}

Repeating the previous steps for all $\ell \in \{0,\dots, d\}$, yields a total bit-complexity in
{\small
\begin{align}\label{eq:bitcompl}
 \sOB\left( \max(M, d^2)^{n-1}\left \lceil{\frac{M}{d^2}}\right \rceil (nd^5(\tau+n)(M^n+n) + n^2 d^6 \Lambda (M^{n-1}+n))\right)\,.
\end{align}}

As we can see in the proof of Prop.~\ref{prop:meval}, to compute these evaluations, we need to approximate each $\x$ to bit accuracy 
$$ \sOO \left( M^nd(\tau+n)+n^2d^2 M\Lambda + nd (\tau+n)\right).$$

 This costs for all $\x \in \mathcal{V}_\CC(\I)$, $\sOB(n\, M \, ( M^nd(\tau+n)+n^2d^2 M\Lambda + nd (\tau+n)))$ \cite[Thm.5]{MEHLHORN201534}, which does not overcome the bit-complexity of Eq.~(\ref{eq:bitcompl}).
\end{proof}

\begin{lemma}
\label{lem:approx}
For every $\x \in V_\CC(\I)$ let $\rho_{\x}$ be a positive integer. We compute $\rho_{\x}$-approximations of $F_{\x}(Y)$ for all  $\x \in V_\CC(\I)$ in 
{\small
\begin{align*}
\sOB \left( n \left( M^3 +M^2 \Lambda + M\max_{\x\in V_\CC(\I)} \rho_{\x} \right)  +  d^{n+1} \left( M^n(\tau+ d \Lambda)   + \sum_{\x\in V_\CC(\I)} \rho_{\x} \right)   \right) \, . 
\end{align*}}
\end{lemma}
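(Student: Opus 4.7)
The plan is to split the computation into two phases: first, approximate the univariate roots $x_j \in V_\CC(F_j)$ for $j \in [n]$ to sufficient precision, and second, for each $\x \in V_\CC(\I)$, evaluate the coefficients of $F$ regarded as an element of $(\ZZ[\X])[Y]$ at $\x$ to the required accuracy. Write $F(\X, Y) = \sum_{i=0}^d f_i(\X) Y^i$, so that producing a $\rho_{\x}$-bit approximation of $F_{\x}(Y)$ amounts to approximating each $f_i(\x) \in \CC$ to additive error $2^{-\rho_{\x}}$. By the Cauchy bound, $|x_j|\le 2^{\OO(\Lambda)}$, and a standard error-propagation analysis shows that carrying out the evaluation at a working precision of $L_\x := \rho_\x + \sOO(\tau + d\Lambda)$ bits is enough to guarantee the desired output error.

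For Step~1, I would invoke Thm.~5 of \cite{MEHLHORN201534} on each $F_j$ with target bit-precision $L := \max_{\x \in V_\CC(\I)} \rho_\x + \sOO(\tau + d\Lambda)$. Each such call costs $\sOB(M^3 + M^2 \Lambda + ML)$; summing over $j \in [n]$ produces the first summand of the claimed bound, namely $\sOB\bigl(n(M^3 + M^2\Lambda + M\max_\x \rho_\x)\bigr)$, after absorbing the $nM\sOO(\tau + d\Lambda)$ overhead into the (much larger) Step-2 cost below.

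For Step~2, for each $\x \in V_\CC(\I)$ I would compute $F_\x(Y)$ by evaluating the $\OO(d^{n+1})$ monomials of $F(\X, Y)$ at $\x$ and grouping the contributions by the exponent of $Y$. One may first precompute all power products $x_1^{a_1}\cdots x_n^{a_n}$ with $|a|\le d$ in $\OO(d^n)$ multiplications; each monomial of $F$ then contributes $\OO(1)$ further operations. Every multiplication involves numbers of absolute value at most $2^{\OO(\tau + d\Lambda)}$ carried out at precision $L_\x$, hence costs $\sOB(L_\x)$, and the per-point cost is $\sOB(d^{n+1} L_\x)$. Summing over the at most $M^n$ points in $V_\CC(\I)$ gives
\[
\sOB\Bigl(d^{n+1}\sum_{\x \in V_\CC(\I)} \rho_\x \; + \; d^{n+1} M^n (\tau + d\Lambda)\Bigr),
\]
matching the second summand. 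Adding the two phases yields the claim.

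The main point where care is needed is the precision bookkeeping in Step~2: one must check that the $\sOO(\tau + d\Lambda)$ slack built into $L_\x$ really does absorb both the dynamic range of intermediate products (bounded by $2^{\OO(\tau + d\Lambda)}$, since each $|x_j|\le 2^{\OO(\Lambda)}$ and each coefficient of $F$ is at most $2^\tau$) and the $\OO(\log d^n)$ guard bits required to control cancellation across the $\OO(d^{n+1})$ summands, so that the final rounded result still satisfies $|\tilde f_i - f_i(\x)|\le 2^{-\rho_\x}$. Once this is in place, the rest is routine summation of the contributions from the two phases.
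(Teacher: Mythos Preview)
Your proposal is correct and follows essentially the same two-phase structure as the paper's proof: first refine the roots of the $F_j$ to precision $L_{\max}=\max_{\x}\rho_{\x}+\sOO(\tau+d\Lambda)$ via \cite[Thm.~5]{MEHLHORN201534}, then evaluate the $d+1$ coefficients $f_i(\X)$ at each approximated $\x$ with working precision $L_{\x}$. The only difference is cosmetic: the paper outsources the precision bookkeeping to \cite[Lem.~1]{Becker2017CountingSO} and the per-point evaluation cost to \cite[Lem.~6]{Bouz14} (which gives $\sOB(d^n(\tau+L_{\x}))$ per coefficient), whereas you spell out the monomial-by-monomial evaluation with precomputed power products directly; both routes lead to the same $\sOB(d^{n+1}L_{\x})$ per point and the same final bound.
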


\begin{proof}
For a $\rho_{\x}$-approximation of $F_{\x}(Y)$, it suffices to consider an $L_{\x}$-approximation of $\x$, where $   L_{\x} \in \sOO(\rho_{\x} +\tau +d \Lambda)$. This follows from \cite[Lem.1]{Becker2017CountingSO}, since the coefficients of $F(\x,Y)$ are polynomials in $ \ZZ[X_1,\dots, X_n]$ of size $(d,\tau)$, and $\max(1, \norm{\x}_\infty) \in 2^{\OO(\Lambda)}$.
To get the desired approximation of each $\x \in V_\CC(\I)$, we compute isolating discs of the roots of each $F_i$, for $i=1,\dots,n$, of size less than $2^{-L_{\max}}$, where $L_{\max} = \max_{x\in V_\CC(\I)} L_{\x}$.
This costs \cite[Thm.5]{MEHLHORN201534}
\begin{align}
\label{lemapprox:eq1}
\sOB\left(n (M^3 +M^2\Lambda + ML_{\max})\right).
\end{align} 

For a given $\x \in V_\CC(\I)$, to compute the $\rho_{\x}$-approximation of $F(\x,Y)$, we evaluate its coefficients at the $L_{\x}$-approximation of $\x$. This costs $\sOB(nd^{n+1}(\tau + L_{\x}))$: 
When we regard $F(\X, Y)$ as a polynomial in $Y$, it has at most $d+1$ coefficients which are polynomials in $\ZZ[\X]$ of size $(d,\tau)$. 
Using \cite[Lem.6]{Bouz14}, we evaluate each one of them in $\sOB(d^n(\tau + L_{\x}))$, and then we multiply the latter bound by $d$.
To obtain the final cost, we sum the latter bound for all $\x\in V_\CC(\I)$ and add the cost in Eq.~(\ref{lemapprox:eq1}) and use the fact that $L_{\max} \in \sOO(\max_{\x\in V_\CC(\I)} \rho_{\x} + \tau +d \Lambda)$.
\end{proof}

Now, by putting everything together, we can prove our main theorem.
\vspace{-0.1cm}
\begin{proof}[Proof of Thm.\ref{thm:complexity}]
(i)
For any $\bm{x} = (x_1, \dots, x_n) \in V_\CC(\mathcal{I})$,  we compute a $\tau_{\x}$ such that $2^{-\tau_{\x}-2} \le \lc(F_{\x}) \le 2^{-\tau_{\x}}$. Then, the polynomial $\tilde{F}_{\x} (Y) :=2^{-\tau_{\x}} F_{\x}(Y)$ satisfies the conditions of Prop.~\ref{prop:solv}, which we then use to isolate its roots (it has the same roots as $F_{\x}(Y)$). From Prop.~\ref{prop:solv}, we can solve every $\tilde{F}_{\x}(Y)$ in 
\begin{align}
\label{eq:compl}
\sOB(d( d^2 + d\log \mah(\tilde{F}_{\x}) + \lgdisc(\tilde{F}_{\x}) ).
\end{align}

For that, we require an approximation of the coefficients of  $\tilde{F}_{\x}( Y)$ to an absolute precision bounded by 
\begin{align*}
    \rho_{\x} \in \sOO(d\log \mah(\tilde{F}_{\x}) + \lsepa(\tilde{F}_{\x})+ \lgdisc(\tilde{F}_{\x}) ).
\end{align*}

Repeating the arguments as in the proof of \cite[Prop.~3.13]{Diatta2022} we have that $\sum_{\x \in V_\CC(\I)} \tau_{\x} \in \OO(\sum_{\x \in V_\CC(\I)}  \rho_{\x})$ and that the computation of $ \tau_{\x}$ does not affect the total complexity. 
Then, using Lem.~\ref{lem:approx}, we compute $\rho_{\x}$-approximations of the coefficients of $\tilde{F}_{\x}$ for all $\x \in V_\CC(\I)$ in 
\begin{align}
\label{eq:compl1}
\sOB \big( n (M^3 +M\Lambda(M+d) + M\max_{\x\in V_\CC(\I)} \rho_{\x} )  +  \nonumber \\+ nd^{n+1}M^n(\tau+ d \Lambda)  + nd^{n+1}\sum_{\x\in V_\CC(\I)} \rho_{\x}   \big)\, . 
\end{align}

From Cor.~\ref{cor1} and Cor.~\ref{cor2}, we have that 
\begin{align*}
    \sum_{\x\in V_\CC(\mathcal{I})}\mult{\I}(\x)\rho_{\x} \in \sOO(M^nd(n+\tau+d) +nd^2M^{n-1}\Lambda)  )\, . 
\end{align*}

Therefore,  Eq.~(\ref{eq:compl1}) becomes
\begin{align}
\label{eq:compl2}
\sOB \big( n (M^3 +M^2\Lambda) + nM^{n+1}d(n+\tau+d) + \nonumber \\ +nM^{n}d^2(n\Lambda +d^n(n+\tau+d+\Lambda)) +n^2d^{n+3}M^{n-1}\Lambda   \big)\, . 
\end{align}

By summing Eq.~(\ref{eq:compl}) for all $\x \in V_\CC(\I)$ yields
\begin{align}
\label{eq:compl3}
\sOB \big(M^nd^2(n+\tau+d) +nd^3M^{n-1}\Lambda  \big)\, . 
\end{align}

We add the bounds in Eq.~(\ref{eq:compl2}) and Eq.~(\ref{eq:compl3}) to conclude (the bound in Eq.~(\ref{eq:compl2}) dominates).

(ii) When the number of distict roots of $F_{\x}(Y)$ for every $\x \in V_\CC(\I)$ is not already known, then we have to compute it using Lem.~\ref{lem:numberofroots}. We add the cost of this computation to the bound of part (i) of the theorem.
\end{proof}

\section{Application: Sum of Square Roots of Integers Problem}
\label{sec:tsp}
We consider the problem of determining the minimum non-zero difference between two sums of square roots of integers. 
It appears as Problem 33 in `The Open Problems Project' and was originally addressed by Joseph O'Rouke \citep{Demaine2007TheOP}.

Let $a_i,b_i\in \ZZ^{\ge 0}$ for $i = 1,\dots, n$ of bitsize $\tau$.  We want to decide if $\sum_{i=1}^n\sqrt{a_i} $ is less than, equal to, or greater than $\sum_{i=1}^n\sqrt{b_i} $.
This problem is also related to the Euclidean Travel Salesman Problem (TSP): Given a set of points in the plane with integer coordinates and $L\in \NN$, decide if there exists a circuit passing through all these points and having total length (with respect to the Euclidean distance) at most $L$.
The length of the path is a sum of square-roots of integers. 

Comparing $\sum_{i=1}^n\sqrt{a_i}$ with $\sum_{i=1}^n\sqrt{b_i} $ in the real-RAM model, can be done trivially. However, in the bit-complexity setting, one has to determine the number of bits that is sufficient to obtain a correct result. 
We by $r(n,\tau)$ denote the minimum positive value of 
$
\big|\sum_{i=1}^n\sqrt{a_i} - \sum_{i=1}^n\sqrt{b_i} \big| \, . 
$
Lower bounds on $r(n,\tau)$, and in turn upper bounds on $-\log r(n,\tau)$, give upper bounds on the precision needed to compare $\sum_{i=1}^n\sqrt{a_i}$ with $\sum_{i=1}^n\sqrt{b_i} $. In particular, if $ -\log r(n,\tau)$ is bounded above by a polynomial in $k$ and $n$, then the sign of $\sum_{i=1}^n\sqrt{a_i} - \sum_{i=1}^n\sqrt{b_i}$ can be computed in polynomial time. 
Nevertheless, existing upper bounds on $-\log r(n,\tau)$ are exponential. In \citep{Burnikel2000ASA,mehlhorn2000generalized} they prove that $-\log r (n,\tau)\in \sOO(\tau 2^{2n})$, through studying separation bounds.

Here, we apply the results of Sec.~\ref{sec:bounds} to  derive bounds that, however, remain exponential in $n$. Nonetheless, the same bounds apply to the sum of all the roots of the associated system that has as root the two quantities that we have to compare. 
We consider the system
\begin{align*}
\begin{Bmatrix}
F_i(X_i):= X_i^2 - a_i= 0 \, , i\in [n]\\
G_i(Y_i):=Y_i^2 - b_i = 0 \, , i\in [n]\\
H(\X, \Y, Z):=(Z - X_1 - \dots - X_n ) (Z - Y_1 - \dots - Y_n ) = 0
\end{Bmatrix}
\end{align*} 

Let $\mathcal{K} = \langle F_1, \ldots, F_n, G_1, \ldots, G_n \rangle $. From Cor.\ref{cor2} we have that  
\begin{align*}
\sum_{(\x,\y)\in V_\CC(\mathcal{K}) } \mult{\mathcal{K}}(\x,\y) \lsepa (H(\x, \y, \cdot)) \in \sOO(n2^{2n}\tau )\, ,
\end{align*}
or equivalently, since all the multiplicities are equal to one, 
\begin{align}
\label{eq:tsp}
\sum_{(\x,\y)\in V_\CC(\mathcal{K}) } \big| \log \big| \sum_{i=1}^n x_i - \sum_{i=1}^n y_i \big| \, \big| \in \sOO(n2^{2n}\tau )\, .
\end{align}
We see that, as Eq.~(\ref{eq:tsp}) shows, not only $-\log r (n,\tau)$ is in $ \sOO(n
\tau2^{2n})$, but also the sum of the differences associated to all the $2^{2n}$ roots of the system $\{F_1=\dots=F_n=G_1=\dots=G_n= 0\}$.

\section{Concluding remarks}

We provided amortized bounds on the separation of a polynomial with coefficients in a multiple field extension. We used these bounds to 
estimate the bit-complexity of isolating its roots and applied them
to the `sum of square roots of integers' problem. For the root isolation, we followed an adaptive approach that we juxtaposed to a theoretical one that uses maximal precision, but leads to better bit-complexity estimates. In a future work, we set our sights on developing an adaptive method that matches the bit-complexity of the theoretical one. 

\paragraph{\textbf{Acknowledgements}} The authors would like to thank Elias Tsigaridas for the several discussions on the subject and Michael Sagraloff for his feedback and the suggestion of the theoretical approach.
\newpage

\bibliographystyle{ACM-Reference-Format}
\balance
\bibliography{msolve}

\end{document}